\documentclass[isre,nonblindrev]{informs4}

\DoubleSpacedXI

\makeatletter\def\ABSfont{\EGT\linespread{0.9}\selectfont}\makeatother

\usepackage{etoolbox}
\AtBeginEnvironment{table}{\OneAndAHalfSpacedXI\normalsize}

\makeatletter
\renewcommand\@makefntext[1]{\SingleSpacedXI\footnotesize
  \parindent 1em\noindent\ignorespaces{\@makefnmark}\,#1}
\renewcommand\footnoterule{\kern-3\p@ \hrule\@width.4\columnwidth \kern2.6\p@}
\makeatother

\usepackage{natbib}
 \bibpunct[, ]{(}{)}{,}{a}{}{,}%
 \def\bibfont{\DoubleSpacedXI\small\raggedright}%
 \def\bibsep{\smallskipamount}%
\usepackage{graphicx}
\usepackage{tikz}
\usetikzlibrary{arrows.meta, positioning}
\usepackage{multirow}
\usepackage[flushleft]{threeparttable}
\usepackage{subcaption}
\usepackage{mathtools, nccmath}
\usepackage{amssymb}
\usepackage{algorithm}
\usepackage{algpseudocode}
\usepackage{booktabs}
\usepackage{comment}
\usepackage{makecell}
\usepackage{enumitem}
\usepackage{anyfontsize}
\usepackage{array}
\usepackage{placeins}
\usepackage{float}

\TheoremsNumberedThrough
\ECRepeatTheorems
\EquationsNumberedThrough

\MANUSCRIPTNO{}
\renewcommand{\theARTICLETOP}{}

\begin{document}
%%%%%%%%%%%%%%%%

\RUNAUTHOR{Li and Gao}
\RUNTITLE{Why Does Misinformation Propagate Faster?}

\TITLE{Why Does Misinformation Propagate Faster? An Algorithmic Perspective on X}

\ARTICLEAUTHORS{%
\AUTHOR{Pan Li\textsuperscript{*}}
\AFF{Georgia Institute of Technology, \EMAIL{pan.li@scheller.gatech.edu}}
\AUTHOR{Shuang Gao\textsuperscript{*}}
\AFF{Arizona State University, \EMAIL{shuanggao@asu.edu}}
}

\ABSTRACT{%
Misinformation is widely reported to propagate faster on engagement-based platforms \citep{vosoughi2018spread}, yet prior work largely focused on empirical analysis, without identifying a specific algorithmic mechanism that results in this phenomenon. Thanks to the open-sourcing of X's recommendation algorithms in March 2023, we conduct in this paper what is, to our knowledge, the first component-level study of the recommendation algorithm deployed by a social media platform, which examines how each of its components affects misinformation propagation. Specifically, we identify the \emph{engagement fungibility} mechanism embedded in the algorithm, where the final score used for providing tweet recommendations is constructed as a weighted sum of all predicted user activities, such as likes, retweets, replies, and quotes. As a result, a tweet can be repeatedly recommended simply because it is predicted to draw many instant reactions (e.g., likes and retweets), even when it is not expected to draw thoughtful responses (e.g., replies and quotes). Since misinformation typically draws a larger share of its engagement from instant reactions, the engagement fungibility mechanism enables it to receive more recommendation exposure on the X platform than it would receive if thoughtful engagement were required, and to propagate faster.

To empirically validate this mechanism, we re-implement X's recommendation algorithm on the USC X 2024 election corpus, and build a calibrated simulation study to analyze the impact of different scoring rules on the propagation of low-credibility tweets. We find that re-tuning the metric weights has little or even a negative impact on reducing the credibility exposure gap, while those scoring rules that set a precondition of thoughtful engagement for amplification would be able to alleviate the gap significantly. This observation holds across eight groups of 46 robustness checks. It is worth noting that low-credibility tweets do not necessarily receive more overall engagement, and that the advantage they gain from the algorithm arises when the additive score enables the predicted instant reactions to substitute for the thoughtful responses that are unlikely to be drawn. Our diagnosis, therefore, yields a simple and deployable fix, a \emph{reflective-threshold gate} that withholds amplification until a tweet is predicted to draw thoughtful engagement, which we find to reallocate exposure away from low-credibility content at no cost to mainstream exposure and with no loss of engagement.
}

\KEYWORDS{Misinformation; Recommender Systems; Algorithmic Amplification; Engagement Fungibility; Social Media Platforms}

\maketitle
{\renewcommand{\thefootnote}{*}\footnotetext{Both authors contributed equally, and the order of authors was determined by a coin flip.}}
\setcounter{footnote}{0}

% ==============================================================
\section{Introduction}
\label{sec:intro}
% ==============================================================

Misinformation has long been observed to propagate faster, farther, and more broadly than true news on engagement-based platforms. This observation is typically attributed to content-level factors, as misinformation is more novel and more emotionally arousing \citep{vosoughi2018spread,brady2017emotion}, while a series of empirical studies has examined individual-level susceptibility \citep{pennycook2019lazy,pennycook2021psychology}, outrage-driven and low-deliberation sharing \citep{proellochs2023mechanisms,mcloughlin2024misinformation}, the structure of false and true cascades of the same size \citep{juul2021comparing}, and the diffusion of community fact-checked posts \citep{drolsbach2023diffusion}.

However, existing studies rely solely on observed social media records, while the underlying recommendation algorithm is largely overlooked, raising three issues in understanding the misinformation propagation mechanism. First, the observations are heavily influenced by the specific recommender design, so the identified factors might be confounded by the algorithmic allocation. Second, social media platforms may develop vastly different recommendation models for different business objectives and system designs, so the empirical findings cannot be generalized to the whole industry. Third, the propagation step converting engagement into further audience lies inside the ranking system, which is invisible in the observations. As a result, while the empirical literature can document that misinformation propagates differentially, it cannot say whether the difference is inherent to human attention or is created by a particular design choice inside the recommender system, nor can it say how a platform could tackle this problem. Answering these questions requires an algorithmic perspective that looks inside the ranking system to locate the component where the propagation asymmetry originates and to develop an effective solution accordingly.

Fortunately, while the specific design is kept confidential by most social media platforms, the X (formerly known as ``Twitter'') engineering team open-sourced a major portion of the platform's recommendation algorithms in March 2023 \citep{twitter2023algorithm}, which offered an unprecedented look at how the ``For You'' timeline fetches, ranks, and filters content. By probing into the algorithm, we conduct what is, to our knowledge, the first algorithmic diagnosis of the misinformation propagation mechanism.

Our main finding is the \textbf{engagement fungibility} mechanism, i.e., signal-class substitutability in the score-aggregation layer of the recommender system, by which low-credibility content\footnote{Since misinformation has no agreed definition at the level of the individual claim, and source credibility is available at the moment a fresh tweet is recommended, when no verdict on the truth of its claim yet exists, we measure misinformation at the level of the publishing source, labeling tweets by the credibility of the domains they link to.} propagates differentially in engagement-based recommenders. The engagement fungibility mechanism refers to the fact that X adopts the additive formulation of the final score $S = \sum_k w_k\, p_k$ used for tweet recommendations, where $p_k$ is the predicted probability of engagement type $k$ and $w_k$ is its weight. This formulation treats engagement signals from System 1 (fast/reactive) and System 2 (slow/reflective) cognitive processes as substitutable evidence of content quality, so that a single strong fast-class signal (e.g., like or retweet) can drive the aggregate score alone in the absence of any slow-class signal (e.g., reply or quote). This mechanism is also consistent with recent work that applies dual-process theory to recommender systems \citep{kleinberg2024challenge,agarwal2024system2}, which argues that engagement-based ranking tends to privilege System 1 over System 2 cognitive responses. Since low-credibility content tends to spread through emotional, low-deliberation sharing \citep{brady2017emotion,mcloughlin2024misinformation} and draws a larger share of its engagement from instant reactions (as we also empirically validate in Section~\ref{sec:counts-observed} of this paper), the additive score allows its predicted reactions to make up for the thoughtful responses it lacks, so that it receives more recommendation exposure than it would if thoughtful engagement were required, which gives it a propagation advantage.

Validating the engagement fungibility mechanism, however, is a challenging task without full access to X's production logs. Additionally, identifying the impact of each recommendation component on the outcome is intrinsically a counterfactual question, as it asks what would happen to the same content under a modified recommender. We tackle these two challenges by building a calibrated \emph{simulation study}, following the research paradigm advocated by \citet{zhang2020consumption}, where we re-implement the X recommendation algorithm, calibrate the simulator based on the USC X 2024 election corpus so that the counterfactual is anchored to observed behavior, and study the impact of multiple architectural variants. We find that simply changing the weight of these signals does not alleviate, and even widens, the propagation gap between low- and high-credibility content, while those variants privileging slow signals over fast signals, which we formalize as \emph{reflective-gated aggregation}, would effectively reduce the gap. Our findings hold across all 46 robustness checks spanning eight groups.

Building on this diagnosis, we subsequently propose a simple and deployable solution, which reshapes the score-aggregation layer with a \emph{reflective-threshold gate}. The gate scales down the additive score of a tweet when its predicted slow-class engagement falls below a threshold, so that a tweet needs to be expected to draw a sufficient amount of thoughtful responses before its one-click reactions can fully amplify it. We demonstrate through further experiments that this solution would be able to reallocate exposure away from low-credibility content at no cost to overall engagement, and we also show how its benefit and cost change with the threshold, so that a platform can choose the optimal threshold that fits its own objectives best.

In this paper, we make the following contributions. First, we identify engagement fungibility, the substitutability of System 1 and System 2 engagement signals in the score-aggregation layer, as the algorithmic mechanism of misinformation propagation. Second, we develop a simulation framework for analyzing the counterfactual outcomes of component-level recommender system design, where we re-implement the X recommendation algorithm, embed it in an agent-based cascade simulator, and calibrate it against the USC X 2024 election corpus. Third, we propose a simple and deployable solution, the reflective-threshold gate, which withholds amplification until predicted slow-class engagement clears a platform-tunable threshold.

% ==============================================================
\section{Related Work}
\label{sec:related}
% ==============================================================

\subsection{Misinformation Research in Information Systems}
\label{sec:related-misinfo}

Beyond the empirical evidence introduced in Section~\ref{sec:intro}, misinformation has been studied in the IS literature at both the user and interface levels. For example, researchers have characterized community-driven rumoring on Twitter during crises \citep{oh2013community}, analyzed how source ratings change whether users believe and would share an article \citep{kim2019says}, and studied the rating designs in platform interventions \citep{kim2019combating}. In addition, neurophysiological evidence shows that social media users largely fail to deliberate on headlines that align with their prior beliefs \citep{moravec2019fake}. It is worth noting that these studies focus on how users judge the content they have been shown, while little attention has been paid to how the content gets shown in the first place.

Meanwhile, an economics-oriented research stream formally models the platform's incentives and optimal moderation policies. A model of a stylized sharing network shows that an engagement-maximizing platform would tilt its algorithm toward filter bubbles that amplify misinformation \citep{acemoglu2024model}, while a parallel stream studies the platform's optimal inspection and signaling policies against fake news \citep{papanastasiou2020fake,candogan2020optimal}. These models treat the algorithm as an abstract policy, so they cannot identify which part of the recommender system is responsible for misinformation propagation, nor what the platform should modify accordingly.

\subsection{Algorithmic Amplification on Social Media}
\label{sec:related-platform}

Beyond misinformation, a broad literature studies how engagement-based recommendation algorithms amplify certain types of content across platforms. On YouTube, audits trace radicalization pathways through recommended channels \citep{ribeiro2020auditing}, and panel data show how users consume radical content on the platform \citep{hosseinmardi2021examining}. On Facebook, misinformation sources draw disproportionate engagement per follower \citep{edelson2021understanding}, algorithmically curated news exposure shows asymmetric ideological segregation during the 2020 U.S.\ election \citep{gonzalez2023asymmetric}, and reducing like-minded exposure in the feed does not measurably change users' attitudes \citep{nyhan2023like}. In addition, the aggregate effect of the feed algorithm on Facebook and Instagram has been measured by switching real users to a reverse-chronological feed during the 2020 U.S.\ election \citep{guess2023algorithms}. A field experiment further shows that algorithmically mediated news exposure affects affective polarization \citep{levy2021social}.

On X, the open-sourcing of the recommendation algorithm in March 2023 \citep{twitter2023algorithm} made the platform's ranking system publicly auditable. Even before the open-sourcing, studies found that the algorithmic timeline amplifies right-leaning political content \citep{huszar2022algorithmic} and curates news differently from a reverse-chronological baseline \citep{bandy2021curating}. More recent audits document the amplification of low-credibility and political content \citep{corsi2023evaluating,ye2025auditing}, find that the timeline favors emotionally charged and toxic content \citep{bouchaud2023crowdsourced}, and show that it amplifies divisive content that users say they do not want \citep{milli2025engagement}. It is worth noting that these studies rely only on empirical analyses, while the internal algorithmic components remain unexplored.

\subsection{Recommender Systems and Simulation-Based Research}
\label{sec:related-dualproc}
\label{sec:related-sim}

Within the broader recommender systems literature in IS, it is well established that recommender design causally shapes downstream outcomes. For example, it is shown that recommendations act as anchors that pull consumers' own preference ratings toward the system's predictions \citep{adomavicius2013do}, personalization drives consumption toward commonality in a field setting \citep{hosanagar2014will}, recommender design shifts sales diversity in a randomized field experiment \citep{lee2019how}, and network structure governs the diffusion of user-generated content \citep{susarla2012social}. This stream of research treats the recommender as a whole unit of analysis, without separating the components of its internal design.

Methodologically, agent-based simulation has an established role in IS research for studying system-level dynamics under interventions that are infeasible or confounded in the field \citep{davis2007developing,harrison2007simulation,miller2007complex}. For example, simulation has been used to show how recommender systems shape sales diversity over time \citep{fleder2009blockbuster}, and an agent-based framework has been developed to understand the longitudinal performance dynamics of recommender systems \citep{zhang2020consumption}. Closer to our analysis, recent simulation studies compare how different recommender families spread misinformation \citep{pathak2023understanding,fernandez2024analysing} and quantify how vulnerable information quality is to manipulation \citep{truong2024vulnerabilities}, though these studies compare whole algorithms with simple baselines and do not vary a single component inside one algorithm.

\subsection{Research Gap}
\label{sec:related-gap}

From the above literature, we identify three research gaps. First, the misinformation literature documents that false news spreads further than true news, yet the recommendation algorithm that decides content exposure remains an unobserved factor. Second, the amplification literature establishes that engagement-based ranking systematically amplifies certain types of content, while neither audits nor field experiments can intervene on a single component inside the algorithm. Third, recommender systems research, along with simulation studies, shows that the ranking layer causally shapes what gets consumed and diffused, without probing a deployed architecture to identify which internal component drives the outcome. This paper addresses these gaps by reconstructing the open-sourced X recommendation algorithm, intervening on its score-aggregation layer in a calibrated simulation, and identifying the component-level mechanism of misinformation propagation. To our knowledge, this paper is the first study to locate the mechanism of misinformation propagation inside X's algorithm, as positioned in Table~\ref{tab:positioning}.

\begin{table}[h]
\centering
\small
\caption{Positioning of this study relative to the reviewed literature.}
\label{tab:positioning}
\begin{tabular}{>{\raggedright\arraybackslash}p{0.29\linewidth} >{\raggedright\arraybackslash}p{0.19\linewidth} >{\raggedright\arraybackslash}p{0.25\linewidth} c}
\toprule
Research stream & Level of analysis & Method & Inside algorithm \\
\midrule
Misinformation research (\S\ref{sec:related-misinfo}) & User and interface & Lab and survey experiments, analytical models & No \\
Algorithmic amplification (\S\ref{sec:related-platform}) & Platform output & Audits, field experiments & No \\
Recommender and simulation (\S\ref{sec:related-sim}) & Whole algorithm & Field experiments, simulation & No \\
\textbf{This paper} & Algorithmic component & Calibrated simulation of the open-sourced algorithm & Yes \\
\bottomrule
\end{tabular}
\end{table}

% ==============================================================
\section{Preliminaries: X's Algorithm and Theoretical Foundation}
\label{sec:theory}
% ==============================================================

\subsection{The X Recommendation Algorithm}
\label{sec:pipeline}

Thanks to the open-sourcing of the recommendation algorithm in March 2023 \citep{twitter2023algorithm}, the internal structure of X's \emph{For You} timeline was made public, which lets us study its deployed pipeline, as we summarize in Figure~\ref{fig:pipeline}. For each timeline request, a candidate-sourcing stage assembles roughly 1{,}500 candidate tweets, drawing about half from accounts the user follows and the remainder from out-of-network sources through community-embedding and graph-based services. A logistic regression model, the Light Ranker, then narrows this pool, and the resulting candidates are fed into the \emph{Heavy Ranker}, a MaskNet neural network \citep{wang2021masknet} that takes several thousand user, author, tweet, and engagement-history features as input and produces for every candidate a vector of predicted engagement probabilities, covering replies, retweets, likes, profile clicks, video views, and negative feedback. Finally, the \emph{score-aggregation layer} combines this vector into a single score through a weighted sum ($13.5$ for reply, $1.0$ for retweet, $0.5$ for like, etc.), and the candidates are ranked by this score.\footnote{In January 2026, X open-sourced its recommender a second time \citep{xai2026algorithm}, updating its prediction model while retaining the additive score-aggregation form, and the ranking weights it published in August 2026 cut the reply weight from $13.5$ to $5.0$. Our corpus was generated under the 2023-era algorithm, which is therefore the architecture we reconstruct. We further evaluate the 2026 re-weighting inside our framework in Section~\ref{sec:sign-asymmetry}, and Check~8 in Appendix~\ref{app:robust} repeats our analysis with a transformer-based Heavy Ranker, which is used in the 2026 system.}

Among these pipeline components, we focus on the Heavy Ranker and its score-aggregation layer in this paper, for the following two reasons. First, the upstream stages mainly control the composition of the candidate pool, while the ordering that determines how much exposure each candidate receives is decided mainly by the Heavy Ranker's scores. Second, within the Heavy Ranker, the engagement objectives and the score-aggregation layer play different roles, as the per-objective predictions estimate how users would react to a candidate, which is a forecasting task, while the aggregation layer sets how much each predicted reaction is worth in the final score, which is a judgment made by the platform. Therefore, the aggregation layer is a design choice that a platform can easily change, which is why we focus our analysis on it.

\begin{figure}[t]
\centering
% Node text is single-spaced; the body's double spacing would otherwise apply.
\resizebox{\linewidth}{!}{\SingleSpacedXI%
\begin{tikzpicture}[
  node distance=0.55cm,
  stage/.style={draw=black!55, rounded corners=2.5pt, align=center, minimum height=2.9cm, text width=2.5cm, inner sep=5pt, fill=black!5, font=\small},
  focus/.style={stage, fill=blue!9, draw=blue!55!black, line width=0.9pt},
  target/.style={stage, fill=orange!22, draw=orange!85!black, line width=1.3pt},
  flow/.style={-{Stealth[length=2.8mm, width=2mm]}, line width=0.9pt, black!70},
  lbl/.style={font=\footnotesize\itshape, black!70, align=center},
  tagnote/.style={font=\footnotesize\itshape, align=center}
]
\node[stage] (src) {\textbf{Candidate sourcing}\\[3pt]\footnotesize $\sim$1{,}500 tweets from follows and out-of-network embeddings};
\node[stage, right=of src] (light) {\textbf{Light Ranker}\\[3pt]\footnotesize logistic-regression prefilter over the candidate pool};
\node[focus, right=of light] (heavy) {\textbf{Heavy Ranker}\\[3pt]\footnotesize MaskNet over several thousand features, one predicted probability per action};
\node[target, right=of heavy] (agg) {\textbf{Score}\\\textbf{aggregation}\\[3pt]$S=\sum_k w_k\,p_k$\\[2pt]\footnotesize $w_{\mathrm{reply}}{=}13.5$\\ $w_{\mathrm{retweet}}{=}1.0$\\ $w_{\mathrm{like}}{=}0.5$};
\node[stage, right=of agg] (serve) {\textbf{For You feed}\\[3pt]\footnotesize tweets served in score order};

\draw[flow] (src) -- (light);
\draw[flow] (light) -- (heavy);
\draw[flow] (heavy) -- node[lbl, above=1pt] {$p_k$} (agg);
\draw[flow] (agg) -- node[lbl, above=1pt] {$S$} (serve);

\end{tikzpicture}}
\caption{The X \emph{For You} recommendation pipeline (March 2023), with the two stages this paper studies highlighted in color.}
\label{fig:pipeline}
\end{figure}

\subsection{Dual-Process Theory and the Cognitive Partition of Engagement}
\label{sec:dualprocess}

Our theoretical foundation is \emph{dual-process theory} \citep{kahneman2011thinking,stanovich2000individual,evans2013dual}, which classifies cognitive processes by the effort they demand, separating fast, automatic System 1 processes from slow, effortful System 2 processes. In IS, the theory has informed user-level misinformation interventions \citep{moravec2020appealing}, and it has recently been applied to recommender systems at the level of user welfare \citep{kleinberg2024challenge,agarwal2024system2}. We draw on this framing to classify engagement actions by the cognitive process that produces them, which we call the \emph{cognitive-effort partition} and summarize in Table~\ref{tab:partition}.

\begin{table}[h]
\centering
\caption{Cognitive-effort partition of engagement actions.}
\label{tab:partition}
\begin{tabular}{lll}
\toprule
Action & Cognitive effort & Dual-process class \\
\midrule
Like         & one click, no reading                          & \textbf{fast (System 1)} \\
Retweet      & one click, often no reading, in-group signaling& \textbf{fast (System 1)} \\
Reply        & read $+$ compose text $+$ defend a position    & \textbf{slow (System 2)} \\
Quote-tweet  & read $+$ compose $+$ frame for own audience    & \textbf{slow (System 2)} \\
\bottomrule
\end{tabular}
\end{table}

It is worth noting that X's published $w_{\mathrm{reply}} = 13.5$ (vs.\ $w_{\mathrm{retweet}} = 1.0$) is consistent with the cognitive-effort classification, as the production ranker weights effortful signals more highly, presumably because they are more informative of content quality. However, whatever values the weights take, the additive form makes every predicted action exchangeable for every other at a fixed rate, so a tweet can compensate for a low predicted reply probability with high predicted one-click reactions. The misinformation literature also locates the sharing of false content in low-deliberation, emotion-driven processes \citep{pennycook2019lazy,pennycook2021psychology,mcloughlin2024misinformation,brady2017emotion,vosoughi2018spread}, so that the partition predicts that content from low-credibility sources draws a disproportionate share of its engagement from the fast class. In our corpus, among tweets that drew the same total engagement, replies make up $33.0\%$ of the total engagement that a low-credibility tweet draws on average, against $46.4\%$ for a high-credibility tweet. Therefore, what our mechanism requires is the cognitive-effort partition of engagement into fast and slow classes, and not any particular set of weight values, as discussed in the next section.

% ==============================================================
\section{The Engagement Fungibility Mechanism}
\label{sec:fungibility}
% ==============================================================

\subsection{Definition of the Mechanism}

Our identified mechanism is \textbf{engagement fungibility}, i.e., signal-class substitutability in the score-aggregation layer. Under this mechanism, the score-aggregation layer treats the engagement signals from cognitively distinct processes, fast (System 1) and slow (System 2), as substitutable evidence of content quality, so that any single high signal can drive the score regardless of which cognitive class produced it, which gives an advantage in exposure to content that draws a larger share of its engagement from instant reactions (Section~\ref{sec:dualprocess}). In economic terms, the linear form prices the two classes as perfect substitutes, and a like is fungible with a reply at a fixed exchange rate set by the weights.

Formally, let $\mathbf{p} = (p_1, \ldots, p_K) \in [0,1]^K$ denote the ranker's predicted engagement probabilities, partitioned into a slow class $\mathcal{S}$ and a fast class $\mathcal{F}$ by the cognitive-effort partition of Section~\ref{sec:dualprocess}. An aggregation rule, which we also call a scoring rule, is a differentiable function $\Phi: [0,1]^K \rightarrow \mathbb{R}$ that is increasing in every slow-class prediction and maps the predictions to the final score $S = \Phi(\mathbf{p})$, and we write $S_{\mathrm{slow}} = \sum_{i \in \mathcal{S}} w_i\, p_i$ and $S_{\mathrm{fast}} = \sum_{j \in \mathcal{F}} w_j\, p_j$ for the class aggregates under the weights of the additive rule. The marginal rate of substitution $\rho_{ij}(\mathbf{p}) = \frac{\partial \Phi / \partial p_j}{\partial \Phi / \partial p_i}$, for $j \in \mathcal{F}$ and $i \in \mathcal{S}$, measures how many units of predicted slow-class engagement one unit of predicted fast-class engagement can replace at the margin.

\begin{definition}[Engagement fungibility]
\label{def:fungibility}
An aggregation rule $\Phi$ exhibits \emph{engagement fungibility} if $\rho_{ij}(\mathbf{p}) = \rho_{ij} > 0$ for all $\mathbf{p} \in [0,1]^K$, every $j \in \mathcal{F}$, and every $i \in \mathcal{S}$.
\end{definition}

The additive rule $\Phi_{\mathrm{add}}(\mathbf{p}) = \sum_k w_k\, p_k$ exhibits the engagement fungibility of Definition~\ref{def:fungibility} with exchange rate $\rho_{ij} = w_j / w_i$ (e.g., $w_{\mathrm{like}} / w_{\mathrm{reply}} = 0.5/13.5 = 1/27$), regardless of how little slow-class engagement a tweet is predicted to draw. The additive rule is, in fact, canonical for the entire fungible class, as we show below in Theorem~\ref{thm:canonical}. Since the recommender selects candidates by their score ordering, we call two aggregation rules \emph{ordinally equivalent} if they induce the same ordering of candidates on $[0,1]^K$.

\begin{theorem}[Canonical representation of fungible rules]
\label{thm:canonical}
An aggregation rule $\Phi$ exhibits engagement fungibility if and only if $\Phi(\mathbf{p}) = g\bigl(\sum_k c_k\, p_k\bigr)$ for a differentiable function $g$ with $g' > 0$ and positive weights with $\sum_k c_k = 1$, and the weight vector $\mathbf{c}$ is unique. Every fungible rule is therefore ordinally equivalent to one and only one normalized additive rule.
\end{theorem}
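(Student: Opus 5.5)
The plan is to turn constancy of the marginal rates of substitution into proportionality of the whole gradient to a fixed positive vector. After that, a connectivity argument on the cube shows that $\Phi$ depends on $\mathbf{p}$ only through one linear index.

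\textbf{Step 1: the gradient has a fixed direction.} First note that $\rho_{ij}(\mathbf{p})$ is only defined when $\partial\Phi/\partial p_i \neq 0$. Since $\Phi$ is increasing in every slow-class prediction, this forces $\partial_i\Phi(\mathbf{p})>0$ for all $i\in\mathcal{S}$ and all $\mathbf{p}$. Fix a reference slow index $i_0$. For $j\in\mathcal{F}$, $\partial_j\Phi=\rho_{i_0 j}\,\partial_{i_0}\Phi$. For another slow index $i$, pick any $j\in\mathcal{F}$. From $\partial_j\Phi=\rho_{ij}\partial_i\Phi=\rho_{i_0j}\partial_{i_0}\Phi$ we get $\partial_i\Phi=(\rho_{i_0j}/\rho_{ij})\,\partial_{i_0}\Phi$. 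Hence $\nabla\Phi(\mathbf{p})=\lambda(\mathbf{p})\,\mathbf{a}$, where $\lambda=\partial_{i_0}\Phi>0$ and $\mathbf{a}$ is a constant vector with all entries positive. Normalize $\mathbf{c}=\mathbf{a}/\sum_k a_k$.

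\textbf{Step 2: $\Phi$ factors through $\mathbf{c}\cdot\mathbf{p}$.} Take $\mathbf{p},\mathbf{q}\in[0,1]^K$ with $\mathbf{c}\cdot\mathbf{p}=\mathbf{c}\cdot\mathbf{q}$. The segment joining them lies in the convex cube. Along it, $\frac{d}{dt}\Phi=\lambda\,\mathbf{c}'\cdot(\mathbf{q}-\mathbf{p})=0$, up to the normalizing constant. So $\Phi(\mathbf{p})=\Phi(\mathbf{q})$.

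Now define $g(t)=\Phi(t\mathbf{1})$ for $t\in[0,1]$. Because $\mathbf{c}\cdot(t\mathbf{1})=t$ and $\mathbf{c}\cdot\mathbf{p}$ ranges exactly over $[0,1]$, we get $\Phi(\mathbf{p})=g(\mathbf{c}\cdot\mathbf{p})$. The function $g$ is differentiable as a composition, with $g'(t)=\nabla\Phi(t\mathbf{1})\cdot\mathbf{1}=\lambda\sum_k a_k>0$.

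\textbf{Step 3: converse.} Conversely, if $\Phi=g(\mathbf{c}\cdot\mathbf{p})$, then $\rho_{ij}=g'c_j/(g'c_i)=c_j/c_i$, which is constant and positive. Moreover $\partial_i\Phi=g'c_i>0$, so $\Phi$ is increasing in the slow predictions.

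\textbf{Step 4: uniqueness and ordinal equivalence.} If $g(\mathbf{c}\cdot\mathbf{p})=\tilde g(\tilde{\mathbf{c}}\cdot\mathbf{p})$, the computation in Step 3 gives $c_j/c_i=\tilde c_j/\tilde c_i$ for every fast–slow pair. Chaining through a common fast or slow index, as in Step 1, shows that all ratios agree. So $\mathbf{c}\propto\tilde{\mathbf{c}}$, and the normalization $\sum_k c_k=1$ forces $\mathbf{c}=\tilde{\mathbf{c}}$.

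Since $g$ is strictly increasing, $\Phi$ induces the same ordering, ties included, as $\mathbf{c}\cdot\mathbf{p}$. This gives existence of an ordinally equivalent normalized additive rule. For the "one and only one" part, suppose two normalized additive rules with $\mathbf{c}\neq\tilde{\mathbf{c}}$ induced the same ordering. Take an interior point $\mathbf{p}$ and a direction $\mathbf{v}$ with $\mathbf{c}\cdot\mathbf{v}=0$ but $\tilde{\mathbf{c}}\cdot\mathbf{v}\neq0$; such a $\mathbf{v}$ exists because the two vectors are non-proportional once both are normalized. Then $\mathbf{p}$ and $\mathbf{p}+\varepsilon\mathbf{v}$ are tied under one rule but strictly ordered under the other, a contradiction.

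\textbf{Main obstacle.} I expect the delicate point to be Step 1's positivity and interpretation. The constant-MRS hypothesis only constrains fast–slow pairs, so the slow–slow and fast–fast ratios must be derived by chaining. This needs both classes to be nonempty, which holds under the partition in Table~\ref{tab:partition}. Strict positivity of the slow partials has to be read off from the well-definedness of $\rho_{ij}$. Once that is settled, using the diagonal $t\mathbf{1}$ cleanly avoids any regularity issue in constructing $g$.
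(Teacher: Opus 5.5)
Your proof is correct and follows the paper's route. You chain the constant cross-class rates to make $\nabla\Phi$ everywhere proportional to a fixed positive vector $\mathbf{c}$, conclude $\Phi = g(\mathbf{c}\cdot\mathbf{p})$ because the level sets are the hyperplanes of $\mathbf{c}\cdot\mathbf{p}$, and get uniqueness because two normalized weight vectors inducing the same ordering must share level sets. Your segment argument on the convex cube, the explicit choice $g(t)=\Phi(t\mathbf{1})$, and the perturbation along a direction $\mathbf{v}$ with $\mathbf{c}\cdot\mathbf{v}=0\neq\tilde{\mathbf{c}}\cdot\mathbf{v}$ just spell out steps the paper states in one line (writing $\mathbf{c}'$ for $\mathbf{a}$ in Step~2 is a harmless slip).
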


\proof{Proof.}
Constancy of the cross-class rates forces every pairwise ratio of marginal values to be constant, since $\partial \Phi/\partial p_{i'} \,\big/\, \partial \Phi/\partial p_i = \rho_{ij}/\rho_{i'j}$ for $i, i' \in \mathcal{S}$, $j \in \mathcal{F}$, and likewise within $\mathcal{F}$. The gradient of $\Phi$ is then everywhere proportional to a fixed positive vector $\mathbf{c}$, normalized to $\sum_k c_k = 1$, so the level sets are parallel hyperplanes and $\Phi = g(\mathbf{c} \cdot \mathbf{p})$ with $g$ increasing. The converse holds since any increasing transform of a weighted sum has $\rho_{ij} = c_j/c_i$. For uniqueness, if normalized $\mathbf{c}$ and $\mathbf{c}'$ induce the same ordering on $[0,1]^K$, their level sets coincide, so $\mathbf{c}' = \lambda\, \mathbf{c}$ with $\lambda > 0$, and normalization gives $\lambda = 1$. \Halmos
\endproof

Theorem~\ref{thm:canonical} leads to the following two implications for our study. First, it shows that a fungible rule cannot give content different exposure depending on whether its predicted engagement is reflective or reactive. Second, it shows that each fungible ordering corresponds to one and only one set of normalized weights, so an analysis that covers the space of additive weights applies to every fungible rule. This motivates our design in which fast-class predictions carry value only alongside slow-class validation, as discussed next.

\subsection{Potential Solution and Hypotheses}
\label{sec:forms}
\label{sec:propositions}

By Theorem~\ref{thm:canonical}, avoiding engagement fungibility requires an aggregation rule whose exchange rate between fast- and slow-class engagement changes with the predictions. The dual-process theory introduced in Section~\ref{sec:dualprocess} motivates our potential solution, where we treat predicted slow-class engagement as the evidence of quality and let predicted fast-class engagement amplify only what that evidence supports. We formalize the solution as \emph{reflective-gated aggregation} in Definition~\ref{def:gated} through a bound on the score when predicted slow-class engagement is absent.

\begin{definition}[Reflective-gated aggregation]
\label{def:gated}
An aggregation rule $\Phi$ is \emph{reflective-gated} if it does not exhibit engagement fungibility and there exist $\gamma_0 \in [0,1)$ and $\beta_0 \geq 0$ such that $\Phi(\mathbf{p}) \leq \gamma_0\, \Phi_{\mathrm{add}}(\mathbf{p}) + \beta_0$ whenever $S_{\mathrm{slow}}(\mathbf{p}) = 0$.
\end{definition}

Reflective-gated aggregation can be instantiated in various formulations, including the following three functional forms, which we label F1, F2, and F3.

\textbf{(F1) Slow-gates-fast multiplicative.} $S = S_{\mathrm{slow}} \cdot \bigl(1 + \alpha \cdot S_{\mathrm{fast}}\bigr)$, with $\alpha > 0$ controlling how much fast-class engagement can amplify slow-validated content. Since $\partial \Phi_{F1}/\partial p_j = \alpha\, w_j\, S_{\mathrm{slow}}$ for $j \in \mathcal{F}$, the substitution rate vanishes as $S_{\mathrm{slow}} \rightarrow 0$, and the form satisfies Definition~\ref{def:gated} with $\gamma_0 = \beta_0 = 0$.

\textbf{(F2) Ratio correction.} $S = \frac{S_{\mathrm{additive}}}{1 + \alpha \cdot S_{\mathrm{fast}} / (S_{\mathrm{slow}} + \epsilon)}$, where $\epsilon > 0$ is a small constant, which down-weights the additive score by the ratio of predicted fast- to slow-class engagement, so the penalty grows with the fast-class predictions that would otherwise drive the additive score. At $S_{\mathrm{slow}} = 0$ the score equals $S_{\mathrm{fast}}\, \epsilon / (\epsilon + \alpha\, S_{\mathrm{fast}}) \leq \epsilon/\alpha$, so it satisfies Definition~\ref{def:gated} with $\gamma_0 = 0$ and $\beta_0 = \epsilon/\alpha$.

\textbf{(F3) Reflective-threshold sigmoid.} $S = S_{\mathrm{additive}} \cdot \sigma\!\left(\frac{S_{\mathrm{slow}} - \theta}{\mathrm{scale}}\right)$, where $\sigma$ is the logistic function and $\mathrm{scale} > 0$ sets how steeply the gate moves from closed to open, which scales the additive score down smoothly as $S_{\mathrm{slow}}$ falls below the threshold $\theta$. At $S_{\mathrm{slow}} = 0$ the score is $S_{\mathrm{fast}}\, \sigma(-\theta/\mathrm{scale})$, so it satisfies Definition~\ref{def:gated} with $\gamma_0 = \sigma(-\theta/\mathrm{scale})$ and $\beta_0 = 0$.

We denote $c(x) \in \{\mathrm{low}, \mathrm{high}\}$ as the credibility class of a tweet $x$ that starts a simulated cascade, which we call a seed tweet, and $Y(x; \Phi)$ as a propagation outcome for $x$ when the score-aggregation rule is $\Phi$. The \emph{propagation gap} under $\Phi$ is $\Delta(\Phi) \;=\; \mathbb{E}\bigl[\,Y(x;\Phi) \mid c(x)=\mathrm{low}\,\bigr] \;-\; \mathbb{E}\bigl[\,Y(x;\Phi) \mid c(x)=\mathrm{high}\,\bigr],$ and $\mathcal{C}(\Phi) = \Delta(\Phi) - \Delta(\Phi_{\mathrm{add}})$ denotes the change in this gap under an alternative rule relative to the additive baseline, so that $\mathcal{C}(\Phi) < 0$ means $\Phi$ narrows the gap. In addition, we write $R(x) = \sum_{j \in \mathcal{F}} w_j\, p_j(x) \big/ \sum_{i \in \mathcal{S}} w_i\, p_i(x)$ for the predicted reactive-to-reflective ratio of seed $x$. We now state the following three hypotheses to demonstrate the effectiveness of reflective-gated aggregation.

\begin{hypothesis}[The aggregation form narrows the gap]
\label{prop:closure}
Reflective-gated aggregation narrows the propagation gap between low- and high-credibility content relative to the additive baseline.
\end{hypothesis}

\begin{hypothesis}[Re-tuning the weights does not help]
\label{prop:parameter}
Re-tuning the weights of the additive rule, which keeps the aggregation fungible, does not narrow the propagation gap.
\end{hypothesis}

\begin{hypothesis}[The gate acts on the engagement mix]
\label{prop:targeting}
A reflective-gated rule demotes content in proportion to how far its predicted engagement tilts toward the fast class, and the propagation gap narrows because low-credibility content tilts further on average.
\end{hypothesis}

Together, Hypotheses~\ref{prop:closure} and~\ref{prop:parameter} assert that narrowing the gap is a property of the functional form, not of the weight values, which are held fixed across the reflective-gated forms. The tilt in Hypothesis~\ref{prop:targeting} is measured by the predicted ratio $R(x)$ at a given level of predicted slow-class engagement, so the hypothesis attributes demotion to the composition of a seed's predicted profile and not to its overall engagement level. Next, we will build a comprehensive simulation framework to test these hypotheses.

% ==============================================================
\section{Data and Simulation Framework}
\label{sec:data}
% ==============================================================

\subsection{USC X 2024 Election Corpus}
\label{sec:data-usc}
\label{sec:data-limits}

We select the USC X 2024 U.S.\ Election Dataset \citep{balasubramanian2024usc} for our study, which contains tweets collected via election-related keyword filtering between May 1 and November 30, 2024. We choose this dataset for the following two reasons. First, a U.S.\ presidential election is a peak period for misinformation circulation, which gives the corpus low-credibility content at a scale sufficient for class-level comparison. Second, its cascades were generated while the open-sourced 2023 X algorithm was in production, so the observed engagement patterns and the ranker we reconstruct come from the same algorithmic era. The data preprocessing details are described in Appendix~\ref{app:data}.

Table~\ref{tab:descriptives} reports descriptive statistics for the records in our study. We find that election discourse is predominantly conversational, as replies account for around two-thirds of the tweets and original posts for nearly one-quarter, while roughly one tweet in ten carries an embedded URL. We also find that engagement is heavily concentrated, since the median tweet draws no replies, retweets, or quote-tweets at all and receives $17$ impressions, while the tweets in the top percentile draw at least $20$ replies and $206$ likes, which is the pattern that the event model of Section~\ref{sec:framework-event} is built to reproduce.

\begin{table}[h]
\centering
\small
\caption{Descriptive statistics of the corpus (May 1 to November 30, 2024).}
\label{tab:descriptives}
\begin{tabular}{lr@{\hspace{2.4em}}lrr}
\toprule
 & & & Mean & Median \\
\midrule
\textit{Coverage} & & \textit{Engagement per tweet} & & \\
\quad Tweets                & $40{,}302{,}075$ & \quad Replies             & $3.81$    & $0$ \\
\quad Unique authors        & $4{,}292{,}568$  & \quad Retweets            & $8.31$    & $0$ \\
\quad Unique conversations  & $17{,}433{,}960$ & \quad Likes               & $39.35$   & $0$ \\
                            &                  & \quad Quote-tweets        & $0.63$    & $0$ \\
\textit{Composition (\% of tweets)} & &         \quad Impressions          & $2{,}108$ & $17$ \\
\quad Replies               & $66.1$           &                           &            & \\
\quad Original posts        & $23.4$           & \textit{Author characteristics} & & \\
\quad Quote-tweets          & $11.0$           & \quad Followers           & $4{,}451$  & $194$ \\
\quad With an embedded URL  & $9.9$            & \quad Accounts followed   & $934$      & $278$ \\
\quad English               & $86.8$           & \quad Tweets posted       & $15{,}011$ & $2{,}947$ \\
\quad Paid verification     & $22.9$           & \quad Favorites given     & $25{,}225$ & $4{,}400$ \\
\quad Credibility-labeled   & $2.2$            & \quad Lists appearing on  & $21$       & $0$ \\
\bottomrule
\end{tabular}
\end{table}

\subsection{Labeling of Credibility}
\label{sec:data-labels}

We label tweets by matching the publishing domains of their embedded URLs against two lists. The low-credibility list is Iffy+ \citep{funke2026iffy}, which aggregates fake-news and misinformation lists from BuzzFeed, FactCheck.org, PolitiFact, Wikipedia, and the Conspiracy-Pseudoscience and Questionable Sources categories of Media Bias/Fact Check (MBFC), whose factual reporting is rated Very Low, Low, or Mixed. The high-credibility list is a hand-curated list of mainstream news outlets, most of which MBFC rates ``High'' or ``Very High'' in factual reporting \citep{mbfc2026methodology}, balanced across the partisan spectrum and covering international as well as U.S.\ outlets, and MBFC's ratings agree closely with other expert and crowd ratings of news-domain quality \citep{lin2023high}. A tweet is labeled low-credibility if all matched domains are in Iffy+ and high-credibility if all are in the high-credibility list. A tweet matching both lists is labeled mixed, and one matching neither remains unlabeled. This yields labels for $\sim$2.2\% of the records in the corpus, with $0.82\%$ of all records labeled low-credibility and $1.38\%$ labeled high-credibility. The labels thus measure the credibility of the publishing source, which is the construct a scoring-time intervention can act on, since no verdict on whether a tweet's claim is true exists when the tweet is first recommended.

% ==============================================================
\subsection{Simulation Framework Overview}
\label{sec:framework}

Our research question is counterfactual, as it asks what would happen to the same content under a different aggregation rule. Since neither algorithmic audits \citep{huszar2022algorithmic,bandy2021curating,corsi2023evaluating,bouchaud2023crowdsourced,ye2025auditing} nor field experiments \citep{guess2023algorithms,nyhan2023like,levy2021social} can answer this question, we adopt the simulation paradigm that IS research has established for interventions that are infeasible or confounded in the field \citep{davis2007developing,harrison2007simulation,miller2007complex,fleder2009blockbuster,zhang2020consumption}. It is worth noting that existing simulation studies on this topic compare different recommenders \citep{pathak2023understanding,fernandez2024analysing,truong2024vulnerabilities}, while our design varies a single layer within the same recommender.

Our simulation system comprises three components, which mirror the stages of the deployed pipeline in Figure~\ref{fig:pipeline}. The first is a \emph{seed population} sampled from observed tweets, each treated as the root of a fresh cascade. The second is a \emph{user population} sampled from the observed author-feature distribution. The third is the \emph{ranker and score-aggregation layer}, which maps each seed to a score under a particular scoring rule and thereby determines its algorithmic exposure. All three components are instantiated from the USC corpus, as we calibrate agent-based models against empirical data \citep{rand2011agent}, and Section~\ref{sec:results-stage1} verifies that the calibrated system reproduces the observed cascade patterns. Table~\ref{tab:framework-components} summarizes the three components, describing for each how it is built and which attributes each of its elements carries.

Our design identifies the architectural effect by holding everything but the scoring rule fixed, including the trained ranker and its learned parameters, the seed and user populations, the exposure mapping, and all calibrated parameters. Additionally, user features are static, and the ranker is not retrained within the simulation, so that our estimates capture the one-step effect of the design change \citep{chaney2018algorithmic,zhang2020consumption}. We model cascades through the total count of each engagement type, and its hourly schedule, as we will explain next.

\begin{table}[t]
\centering
\small
\caption{Overview of the three simulated-system components.}
\label{tab:framework-components}
\begin{tabular}{>{\raggedright\arraybackslash}p{0.15\linewidth} >{\raggedright\arraybackslash}p{0.42\linewidth} >{\raggedright\arraybackslash}p{0.33\linewidth}}
\toprule
\textbf{Component} & \textbf{Population-level modeling} & \textbf{Per-agent state (static)} \\
\midrule
User pool & 50{,}000 users sampled from observed USC authors, stratified by follower-count quintile $\times$ paid verification. & Follower/friend/status/favorite/list counts and the paid-verification flag \\
\midrule
Seed population & Sampled from observed tweets, stratified by credibility label for the hypothesis test, each seed treated as a fresh cascade root. & Posting time, text, embedded URLs, and engagement counts (training targets only) \\
\midrule
Ranker $+$ aggregation layer & Parallel MaskNet \citep{wang2021masknet} with four engagement objectives, published weights for replies, retweets, and likes, and the scoring rule as the only component varied. & Per-objective probabilities, aggregate and relative scores, exposure allocation, and activity indicator \\
\bottomrule
\end{tabular}
\end{table}

\subsection{Ranker, Score Aggregation, and Exposure}
\label{sec:framework-ranker}
\label{sec:framework-exposure}

We implement the ranker as a parallel MaskNet \citep{wang2021masknet}, following the architecture of the open-sourced Heavy Ranker. Our ranker passes 18 account, tweet, and time-of-day features (listed in Appendix~\ref{app:data}) through three instance-guided masked blocks and a shared output layer, and outputs four engagement predictions, for replies, retweets, likes, and quotes. For each seed, the trained ranker produces a per-objective engagement probability $p_k = \sigma\bigl(f_k(\mathbf{x})\bigr)$ for $k \in \{\mathrm{reply},\,\mathrm{retweet},\,\mathrm{like},\,\mathrm{quote}\}$, where $\mathbf{x}$ is the seed's standardized feature vector, $f_k$ is the logit of objective $k$, and $\sigma$ is the logistic function. The score-aggregation layer then collapses the four predictions into a single score that determines exposure:
\begin{equation}
S \;=\; \sum_k w_k\, p_k,
\label{eq:framework-additive}
\end{equation}
where $w_k$ is the weight of objective $k$, taken from the published Heavy Ranker weights for replies ($13.5$), retweets ($1.0$), and likes ($0.5$), and set to $2.0$ for quotes, which have no counterpart among the published objectives. The reflective-gated forms differ from the additive rule only in this layer, as each rule replaces Eq.~\eqref{eq:framework-additive} with one of the three forms that we already introduced in Section~\ref{sec:forms}.

The aggregate score determines how widely the algorithm circulates a seed tweet. To make this concrete, we normalize each score by the median score of the seed population, $S_{\mathrm{rel}} = S / \operatorname{med}(S)$, and convert the relative score into a per-cascade exposure allocation $\mathit{E}_i = \operatorname{clip}\bigl(\beta \cdot S_{\mathrm{rel},i},\, E_{\min},\, E_{\max}\bigr)$, where $\beta$ is the calibrated mean number of exposures for a median-scored cascade, and the bounds $E_{\min} = 0$ and $E_{\max} = 50{,}000$ keep extreme scores within a plausible range. The allocation serves as the simulator's measure of exposure, since the corpus does not record which users are actually shown a tweet. The allocation is then spread over the 24 hours after posting by a schedule that combines a decay over time with the daily pattern of user activity, so that a seed posted at hour $h_i$ receives the following exposure in hour $t$ after posting:
\begin{equation}
q_{i,t} \;=\; \mathit{E}_i \cdot \frac{\exp(-t/\tau)\; d_{(h_i+t)\bmod 24}}{\sum_{s=0}^{23} \exp(-s/\tau)\; d_{(h_i+s)\bmod 24}}, \qquad t = 0, 1, \ldots, 23,
\label{eq:framework-schedule}
\end{equation}
where $\tau = 6$ hours is the decay constant and $d_h$ measures how busy hour $h$ of the day is relative to an average hour, which we take from the times at which tweets were posted in our corpus.

\subsection{Engagement Event Model}
\label{sec:framework-event}

The event model translates a cascade's exposure into engagement counts in two stages to reproduce two empirical regularities of engagement on X, namely that most tweets receive no engagement at all, and that the engagement concentrates heavily on a small share of tweets. In the first stage, $a_i \sim \mathrm{Bernoulli}(\pi_{\mathrm{active}})$ determines whether the seed tweet $i$ attracts any engagement at all, where $\pi_{\mathrm{active}}$ is the calibrated share of active cascades, which captures the large share of tweets that circulate without response. In the second stage, conditional on activity, engagement accrues over the 24-hour horizon according to
\begin{align}
\mu_{i,k} &\;=\; \mathit{E}_i \cdot p_{i,k}, \label{eq:framework-mu}\\
\lambda_{i,k} &\;\sim\; \mathrm{Gamma}\!\bigl(r,\; \mu_{i,k}/r\bigr), \label{eq:framework-gamma}\\
n_{i,t,k} &\;=\; a_i \cdot \mathrm{Poisson}\!\bigl(\lambda_{i,k}\cdot \tilde q_{i,t}\bigr), \label{eq:framework-counts}
\end{align}
so a cascade first draws an engagement intensity $\lambda_{i,k}$ around its expected total $\mu_{i,k}$, which is the product of how widely the algorithm circulates the seed ($\mathit{E}_i$) and how likely an exposed user is to act ($p_{i,k}$). The cascade then accumulates the counts $n_{i,t,k}$ hour by hour in proportion to its exposure schedule, where $\tilde q_{i,t}=q_{i,t}/\mathit{E}_i$ is the share of its exposure that falls in hour $t$. Summing over hours, an active cascade's objective-$k$ total $N_{i,k} = \sum_t n_{i,t,k}$ therefore follows a Negative-Binomial distribution,
\begin{equation}
P\bigl(N_{i,k} = n \mid a_i = 1\bigr) \;=\; \frac{\Gamma(n+r)}{\Gamma(r)\,n!}\left(\frac{r}{r+\mu_{i,k}}\right)^{\!r}\left(\frac{\mu_{i,k}}{r+\mu_{i,k}}\right)^{\!n},
\label{eq:framework-nb}
\end{equation}
with mean $\mu_{i,k}$ and variance $\mu_{i,k} + \mu_{i,k}^{2}/r$. The dispersion $r$ is shared across the four objectives and governs how unevenly engagement concentrates across cascades, since the variance grows as $r$ falls.

The three calibrated parameters $(\pi_{\mathrm{active}}, \beta, r)$ are set so that the simulated root-reply counts (i.e., the number of replies received by the tweet that starts a conversation) match the share of tweets with no replies, the average reply count among tweets with at least one reply, and the variance of those counts after dropping the largest 5\%. It is worth emphasizing that the calibration is performed once under the additive rule and remains fixed for all subsequent analyses, so any difference in propagation between scoring rules reflects the difference in the score-aggregation rules. On the USC X corpus, this gives $\pi_{\mathrm{active}}=0.216$, $\beta=100.9$, and $r=0.318$, and we describe the complete, detailed simulation procedure in Appendix~\ref{app:algorithm}.

% ==============================================================
\section{Empirical Analysis and Results}
\label{sec:methods}
\label{sec:results}
% ==============================================================

We now initialize our simulation framework on the USC X corpus, where the user pool and the seed population are sampled from observed accounts and tweets, the ranker is trained on observed engagement, and the event model is calibrated to the observed count distribution. We proceed with our empirical analysis in two stages: in Stage~1, we perform a \emph{label-free validation} and ask whether the simulator reproduces real cascade patterns, and in Stage~2, we perform the \emph{hypothesis test} and ask whether changing the aggregation form differentially affects the propagation of low- and high-credibility tweets. The hypothesis test compares five scoring rules, summarized in Table~\ref{tab:regimes}.

\begin{table}[h]
\centering
\small
\caption{Scoring rules compared in the analysis.}
\label{tab:regimes}
\begin{tabular}{>{\raggedright\arraybackslash}p{0.30\linewidth} >{\raggedright\arraybackslash}p{0.36\linewidth} >{\raggedright\arraybackslash}p{0.26\linewidth}}
\toprule
Scoring rule & Formula & Role \\
\midrule
Additive (baseline) & $S=\sum_k w_k\,p_k$ & Production-style baseline \\
Slow-gates-fast multiplicative (F1) & $S=S_{\mathrm{slow}}\,(1+\alpha\,S_{\mathrm{fast}})$ & Reflective-gated form \\
Ratio-correction (F2) & $S=S_{\mathrm{additive}}\,/\,\bigl(1+\alpha\,S_{\mathrm{fast}}/(S_{\mathrm{slow}}+\epsilon)\bigr)$ & Reflective-gated form \\
Reflective-threshold sigmoid (F3) & $S=S_{\mathrm{additive}}\cdot\sigma\bigl((S_{\mathrm{slow}}-\theta)/\mathrm{scale}\bigr)$ & Reflective-gated form \\
Retuned additive (control) & Additive with $w_{\mathrm{reply}}$ halved & Parameter-only control \\
\bottomrule
\end{tabular}
\end{table}

\subsection{Stage 1: Validation of Our Simulation Framework}
\label{sec:methods-stage1}
\label{sec:results-stage1}

We validate the simulator on four metrics, each computed identically on the observed corpus and on the simulator output. These are the distribution of root reply counts, the distribution of aggregate engagement counts, the per-tweet reflective share of engagement (replies over the total engagement counts), and the distribution of time to peak (the hour carrying the maximum reply rate). We compare the observed and simulated means of each metric with two-sample Welch $t$-tests, and report the comparison in Table~\ref{tab:stage1}. We find that none of the four differences is statistically significant (all $p > 0.10$), and the simulated and observed curves track closely through the body of each distribution (Figure~\ref{fig:stage1}). We additionally report the Kolmogorov--Smirnov statistics for the reflective share in Appendix~\ref{app:calib}. The four metrics, therefore, validate the simulator at the level of engagement counts.

\begin{table}[h]
\centering
\caption{Stage 1 validation statistics.}
\label{tab:stage1}
\begin{tabular}{lcccc}
\toprule
Metric & Observed mean & Simulated mean & Welch $t$ & $p$ \\
\midrule
Root reply count            & $8.20$  & $8.88$  & $-0.43$ & $0.67$ \\
Aggregate engagement        & $17.89$ & $15.98$ & $0.85$  & $0.39$ \\
Reflective engagement share & $0.097$ & $0.099$ & $-0.52$ & $0.60$ \\
Time to peak (hours)        & $1.66$  & $1.51$  & $1.53$  & $0.13$ \\
\bottomrule
\end{tabular}
\end{table}

\begin{figure}[h]
\centering
\includegraphics[width=0.7\linewidth]{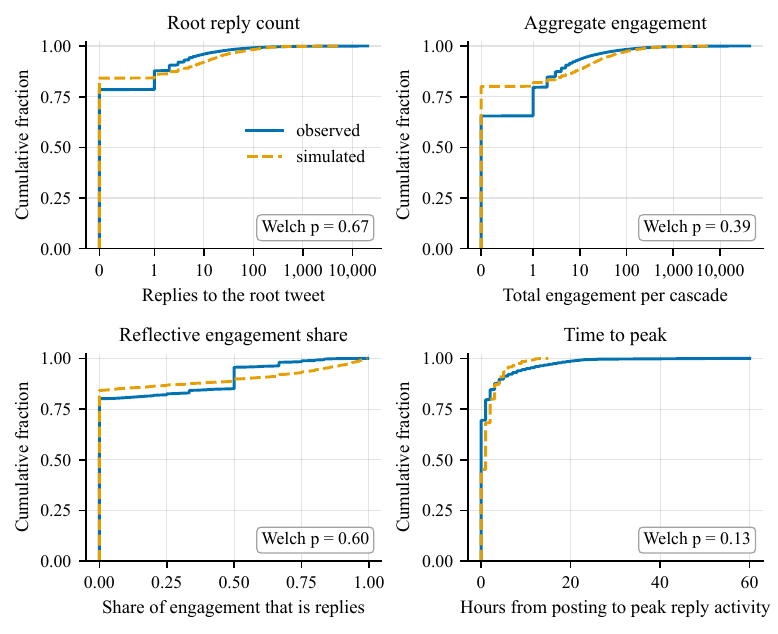}
\caption{Stage~1 validation. Simulated versus observed distributions for the four validation metrics.}
\label{fig:stage1}
\end{figure}

\begin{table}[h]
\centering
\small
\caption{Impressions received by tweets on the platform and exposure allocated by the simulator.}
\label{tab:expvalid}
\begin{tabular}{lcc}
\toprule
Statistic & Observed impressions & Simulated exposure \\
\midrule
Low-credibility share of the total & $20.6\%$ & $22.1\%$ \\
Credibility gap in shares (low $-$ high) & $-58.8$ pts $[-70.5, -43.4]$ & $-55.9$ pts $[-59.4, -52.4]$ \\
Mean ratio, low to high & $0.334$ $[0.223, 0.508]$ & $0.372$ $[0.329, 0.415]$ \\
Gini across tweets & $0.972$ & $0.906$ \\
Top 10\% share of the total & $98.1\%$ & $82.5\%$ \\
$p90 / p50$ & $19.8$ & $14.3$ \\
\bottomrule
\end{tabular}
\par\vspace{2pt}{\SingleSpacedXI\footnotesize \textit{Note.} Computed on the $18{,}884$ labeled original tweets. Simulated exposure is the additive-rule allocation for those same tweets. Brackets give $95\%$ bootstrap intervals over $2{,}000$ resamples stratified by credibility class.\par}
\end{table}

In addition, the impression record at the platform enables us to validate the exposure layer. To do so, we compute the exposure that the simulator allocates to each of the 18,884 labeled original tweets of the held-out slice under the additive rule, and we compare it with the number of impressions that the same tweets received on the platform. As reported in Table~\ref{tab:expvalid}, we find that the simulator reproduces the observed allocation closely. The credibility gap in exposure share is $-55.9$ points against an observed $-58.8$, and the mean exposure ratio between the classes is $0.372$ against an observed $0.334$, with the bootstrap intervals overlapping in both cases. The concentration of exposure across tweets is close as well, with a Gini of $0.906$ against $0.972$ and $82.5\%$ of simulated exposure falling on the top decile of tweets against $98.1\%$ of impressions. With the simulator validated at both layers, we now turn to the hypothesis tests.

\subsection{Testing Hypothesis 1: The Aggregation Form Narrows the Gap}
\label{sec:methods-stage2}
\label{sec:results-stage2}

To test Hypothesis~\ref{prop:closure}, which predicts that reflective-gated aggregation would narrow the propagation gap relative to the additive baseline, we draw 2{,}500 low-credibility and 2{,}500 high-credibility seed tweets, simulate each scoring rule 100 times, and measure two different outcomes: the \emph{exposure} as the direct measure of algorithm output, and the \emph{cascade size} as its downstream consequence once simulated users respond.

As reported in Table~\ref{tab:arch}, we find that reflective-gated aggregation narrows the exposure gap significantly under all three reflective-gated forms, with the reflective-threshold sigmoid producing the largest reduction. Meanwhile, the parameter-only control moves the gap significantly in the \emph{opposite} direction. Since the reflective-gated forms keep every weight of the additive baseline and change only the functional form, the opposite movements of the gap are attributed to the aggregation form, which supports Hypothesis~\ref{prop:closure}. The same pattern holds in every variation of the robustness checks in Section~\ref{sec:robust}, so the gap reduction does not depend on any single configuration choice.

In addition, as reported in Table~\ref{tab:arch}, we find that the cascade size contrast of the reflective-gated aggregation is also significantly negative, indicating that it narrows the credibility gap in downstream cascade size as well. The parameter-only control, meanwhile, widens it slightly. The downstream effect also becomes stronger as the ranker improves, since when we retrain the ranker on two and then five times as much data while holding the evaluation corpus fixed (Check~7 in Appendix~\ref{app:robust}), we find that the cascade size contrast grows in magnitude, while the parameter-only control still moves the gap in the opposite direction.

\begin{table}[h]
\centering
\caption{Exposure and cascade size contrasts against the additive baseline.}
\label{tab:arch}
\begin{tabular}{lcc}
\toprule
Scoring rule & Exposure contrast & Cascade size contrast \\
\midrule
Slow-gates-fast multiplicative (F1) & $-16.14$$^{***}$ (4.34) & $-3.79$$^{***}$ (0.94) \\
Ratio-correction (F2)               & $-2.63$$^{***}$ (0.28)  & $-0.76$$^{**}$ (0.38) \\
Reflective-threshold sigmoid (F3)   & $-37.83$$^{***}$ (5.05) & $-8.64$$^{***}$ (2.48) \\
Retuned additive (control)          & $+1.52$$^{***}$ (0.11)  & $+0.54$$^{**}$ (0.26) \\
\bottomrule
\end{tabular}
\par\vspace{2pt}{\SingleSpacedXI\footnotesize \textit{Note.} Bootstrap standard errors in parentheses. $^{*}\,p<0.10$, $^{**}\,p<0.05$, $^{***}\,p<0.01$.\par}
\end{table}

\subsection{Testing Hypothesis 2: Re-Tuning the Weights Does Not Help}
\label{sec:sign-asymmetry}

Hypothesis~\ref{prop:parameter} predicts that adjusting the weights of the additive rule, while keeping its additive form, would not narrow the propagation gap. The first piece of evidence is already visible in Table~\ref{tab:arch}, since the parameter-only control moves it significantly in the opposite direction. Since our control changes only a single weight, we next ask whether a more systematic weight configuration would do better. We therefore compute the gap reduction across 2{,}083 distinct weight configurations that together cover the range of ways of dividing the weights among the engagement objectives, and compare them against the reduction achieved by the reflective threshold. Surprisingly, we find that re-weighting usually backfires, as $74.3\%$ of the configurations \emph{widen} the credibility gap, while only $1.0\%$ reach even half of the reflective threshold's gap reduction, and none of them manage to reach $80\%$. Even the most obvious fix, setting the retweet and like weights to zero so that only replies and quote-tweets count, achieves only $6.9\%$ of the threshold form's gap reduction. Putting all the weight on retweets and likes does the opposite, widening the gap by more than a third of the reflective threshold's gap reduction, consistent with what our mechanism predicts.

A platform, however, cannot choose weights freely, since the weights also determine what the feed shows, and thus how well the feed serves the business objectives of the platform. We therefore trace the best gap reduction attainable at each level of fidelity (i.e., Spearman correlation) to the published content ordering in Table~\ref{tab:simplex}, and we find that reducing the gap and preserving the published ordering are in direct conflict. The unconstrained best configuration reaches $74.6\%$ of the gap reduction compared to the reflective threshold, but it does so by placing all weight on the single quote objective, which amounts to a substantially different ranking model. Once the ordering is held close to the deployed one, the attainable reduction collapses to under a quarter. Therefore, re-weighting does not provide a usable alternative, since the configurations that would help most are the ones a platform cannot adopt, which supports our Hypothesis~\ref{prop:parameter}.

\begin{table}[h]
\centering
\caption{Best attainable gap reduction by re-weighting, at each level of fidelity to the published ordering.}
\label{tab:simplex}
\begin{tabular}{lcc}
\toprule
Fidelity to the published ordering & Configurations & Best gap reduction attained \\
\midrule
Ordering closely preserved ($\rho \geq 0.95$) & 1{,}489 & $24.3\%$ \\
Ordering loosely preserved ($\rho \geq 0.90$) & 1{,}946 & $64.0\%$ \\
Unconstrained & 2{,}083 & $74.6\%$ \\
\bottomrule
\end{tabular}
\par\vspace{2pt}{\SingleSpacedXI\footnotesize \textit{Note.} Gap reduction is expressed as a share of the reduction the reflective threshold achieves. $\rho$ is the Spearman correlation between the configuration's content ordering and the published one.\par}
\end{table}

It is worth noting that X has since re-weighted its own algorithm in production, as its 2026 open-sourcing of the current algorithm \citep{xai2026algorithm} keeps the additive form and changes only the weights, cutting the reply weight from $13.5$ to $5.0$ and setting the quote weight equal to it. Repeating our previous analysis on this newly deployed weight configuration, we find that it would leave the credibility gap statistically unchanged (gap $+0.8$, s.e.\ $0.71$) while barely changing the content ordering ($\rho = 0.995$), consistent with our findings.

\subsection{Testing Hypothesis 3: The Gate Acts on the Engagement Mix}
\label{sec:counts}

Hypothesis~\ref{prop:targeting} predicts that a reflective-gated rule moves a tweet down according to its engagement mix and not according to its credibility. By the \emph{engagement mix} of a tweet, we mean how much of its engagement comes from instant reactions (retweets and likes) and how much from thoughtful responses (replies and quotes). This hypothesis matters because the gate never observes whether a tweet links to a low- or high-credibility source, so it can reduce the exposure of low-credibility content only if that content differs from high-credibility content in what the gate does observe. We test the hypothesis in four steps. We first check that the division of engagement into a fast and a slow class matches how engagement behaves in the corpus, then show that low-credibility content draws more of its engagement from instant reactions, both in the observed records and in the predictions of the simulated ranker. We then examine how the gate treats the two classes in the simulation, and finally rule out two other explanations for why the gate reduces the exposure of low-credibility content.

\subsubsection{Grounding the Cognitive Partition}
\label{sec:partition}

We first check that the division into a fast and a slow class is not arbitrary. For each account and content feature, we take the tweets in the top fifth and in the bottom fifth of that feature and compare how reply-heavy their engagement is. Table~\ref{tab:features} reports the difference between the two fifths for each feature. We find that features which indicate standing and substance, such as the number of lists an author appears on and the length of the tweet, go with reply-heavy engagement, while a low-credibility link, a high share of capital letters, and angry or disgusted language go with retweet-heavy engagement. This is what dual-process theory predicts, since replying requires reading a tweet and writing a response while retweeting takes a single click, and it supports treating replies and quote-tweets as the slow class and retweets and likes as the fast class. To make sure that our results do not depend on this particular division, we repeat the analysis under three other divisions in Appendix~\ref{app:robust}, and we find that the gate narrows the credibility gap under all four divisions.

\begin{table}[h]
\centering
\caption{Engagement mix by account and content feature.}
\label{tab:features}
\begin{tabular}{lll}
\toprule
Feature & Reflective-to-reactive gap & Interpretation \\
\midrule
Listed count (log)      & $+0.286$ & prestige $\rightarrow$ reply-heavy \\
Followers (log)         & $+0.258$ & prestige $\rightarrow$ reply-heavy \\
Text length             & $+0.219$ & substance $\rightarrow$ reply-heavy \\
Paid verification       & $+0.214$ & verification $\rightarrow$ reply-heavy \\
Follower-to-friend ratio (log) & $+0.191$ & influence $\rightarrow$ reply-heavy \\
Mention count           & $+0.134$ & conversation $\rightarrow$ reply-heavy \\
Low-credibility URL     & $-0.118$ & low-credibility $\rightarrow$ retweet-heavy \\
Capitalization ratio    & $-0.113$ & all-caps $\rightarrow$ retweet-heavy \\
Anger/disgust emotion   & $-0.027$ & emotional $\rightarrow$ slightly retweet-heavy \\
\bottomrule
\end{tabular}
\par\vspace{2pt}{\SingleSpacedXI\footnotesize \textit{Note.} The gap compares how reply-heavy the engagement of tweets is in the top and bottom fifth of each feature, and a positive value means that the top fifth is more reply-heavy.\par}
\end{table}

\subsubsection{Observed Engagement by Credibility Class}
\label{sec:counts-observed}

We next ask whether low-credibility content does draw more of its engagement from instant reactions. Since high-credibility content draws more engagement in total, a direct comparison would mix the amount of engagement with its composition, so we compare low- and high-credibility tweets that received the same total engagement. Table~\ref{tab:observed} reports this comparison, which uses the labeled tweets directly and involves no simulation. We find that, at the same total engagement, low-credibility tweets receive significantly more retweets and fewer replies than high-credibility tweets, while their likes and their total number of replies, retweets, and quotes are about the same. Replies make up $33.0\%$ of the replies, retweets, and quotes that a low-credibility tweet draws on average, against $46.4\%$ for a high-credibility tweet. Low-credibility content therefore draws more of its engagement from instant reactions, which is the difference that Hypothesis~\ref{prop:targeting} requires.

\begin{table}[h]
\centering
\caption{Observed per-tweet engagement by credibility class, for tweets matched on total engagement.}
\label{tab:observed}
\begin{tabular}{lcc}
\toprule
Metric (mean per tweet) & low-credibility ($n=3{,}116$) & high-credibility ($n=6{,}853$) \\
\midrule
Replies                       & $1.6$  & $2.1$ \\
Retweets                      & $3.7$  & $3.2$ \\
Likes                         & $8.1$  & $8.0$ \\
Reply share of engagement     & $33.0\%$ & $46.4\%$ \\
Cascade-size proxy            & $5.5$ & $5.6$ \\
\bottomrule
\end{tabular}
\par\vspace{2pt}{\SingleSpacedXI\footnotesize \textit{Note.} Reply share is replies over replies, retweets, and quotes, averaged over tweets with at least one such action. The cascade-size proxy is the sum of replies, retweets, and quotes.\par}
\end{table}

\subsubsection{Predicted Engagement by Credibility Class}
\label{sec:predicted}

The gate acts on the predictions of the ranker at the moment a tweet is scored, and not on the engagement that the tweet later receives, so the same difference must appear in the predictions. Table~\ref{tab:predicted} reports the predictions of the ranker for the same matched tweets. We find that the ranker predicts higher probabilities of a retweet and a like for low-credibility tweets but lower probabilities of a reply and a quote, so that the predicted engagement of a low-credibility tweet leans more toward instant reactions. The ratio of predicted fast- to slow-class engagement is $2.97$ for low-credibility tweets against $2.62$ for high-credibility tweets (Figure~\ref{fig:mechanism}a). Since the gate compares the predicted slow-class engagement of a tweet with a threshold, a low-credibility tweet is less likely to clear it.

\begin{table}[h]
\centering
\caption{Ranker-predicted engagement by credibility class, for the matched tweets.}
\label{tab:predicted}
\begin{tabular}{lcc}
\toprule
Quantity (mean per tweet) & low-credibility & high-credibility \\
\midrule
$P(\text{reply})$    & $0.238$ & $0.282$ \\
$P(\text{retweet})$  & $0.323$ & $0.303$ \\
$P(\text{like})$     & $0.485$ & $0.466$ \\
$P(\text{quote})$ & $0.103$ & $0.116$ \\
\textbf{Predicted reactive/reflective} & $\mathbf{2.973}$ & $2.620$ \\
$S_{\mathrm{additive}}$ & $3.935$ & $4.417$ \\
$S_{\mathrm{F1}}$  & $6.674$ & $7.283$ \\
\bottomrule
\end{tabular}
\par\vspace{2pt}{\SingleSpacedXI\footnotesize \textit{Note.} Computed on the matched tweets of Table~\ref{tab:observed}.\par}
\end{table}

\begin{figure}[t]
\centering
\includegraphics[width=0.7\linewidth]{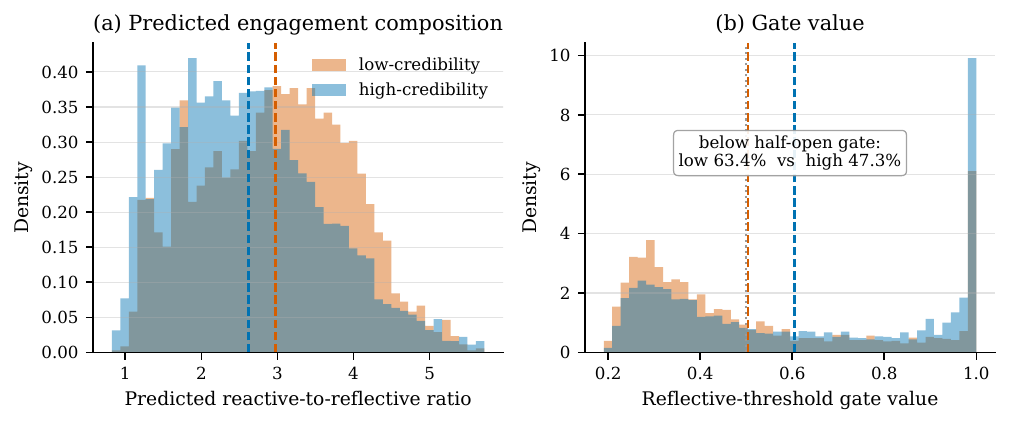}
\caption{Class-level distributions. (a)~The ranker's predicted reactive-to-reflective ratio over the matched tweets of Table~\ref{tab:observed}. (b)~The reflective-threshold gate value over the hypothesis-test seeds.}
\label{fig:mechanism}
\end{figure}

\subsubsection{How the Gate Acts}
\label{sec:mechanism}

We then look at what the gate does with these predictions in the simulation. The reflective threshold multiplies the additive score of a tweet by a gate value between $0$ and $1$, and the gate value falls below $0.5$ once the predicted slow-class engagement of the tweet drops below the threshold. We find that $63.4\%$ of low-credibility tweets receive a gate value below $0.5$, against $47.3\%$ of high-credibility tweets (Figure~\ref{fig:mechanism}b). Since the gate never observes the credibility label, this difference can only come from the different predicted engagement of the two classes. It is worth noting that the multiplicative form and the threshold form reach their effect in different ways. Under the multiplicative form, the share of tweets that are moved down, that is, given a lower position than under the additive rule, is about the same in both classes ($43.1\%$ of low-credibility against $44.2\%$ of high-credibility tweets), while under the reflective threshold $52.8\%$ of low-credibility tweets are moved down against $38.6\%$ of high-credibility tweets, which is why it produces the largest effect of the three forms in Table~\ref{tab:arch}.

\subsubsection{Ruling Out Alternative Explanations}
\label{sec:falsification}

We consider two alternative explanations for why the gate reduces the exposure of low-credibility content, and we test each in turn.

\emph{Alternative 1: high-credibility content has more predicted engagement of every kind.} If this were the explanation, a gate on the fast class would reduce the exposure of low-credibility content in the same way as our gate on the slow class. To test it, we build a placebo gate that has the same threshold form as our gate but acts on the predicted fast-class engagement, with its threshold and steepness set at the matching percentiles of the fast-class predictions. We find that the two gates move down opposite types of tweets. Under our gate, a tweet is moved down further if the share of its score that comes from instant reactions is larger, while under the placebo gate, such tweets are moved up, and the placebo gate narrows the exposure gap by $12.1$ (s.e.\ $0.82$) less than our gate does. The larger amount of predicted engagement of high-credibility content, therefore, does not explain which tweets our gate moves down.

\emph{Alternative 2: the gate acts on the amount of predicted slow-class engagement and not on the mix.} If this were the explanation, how far a tweet is moved down would follow its total predicted slow-class engagement and not its engagement mix. We find the opposite, as the demotion of a tweet increases with its predicted reactive-to-reflective ratio, while it is barely related to its total predicted slow-class engagement, and the relationship with the ratio holds among tweets with the same predicted slow-class engagement and within each credibility class separately. The mix can matter even though the gate never observes it, because the gate scales the entire additive score, so of two tweets with the same predicted slow-class engagement, the one whose score relies more on instant reactions is ranked higher and has further to fall.

Together, the two tests show that the gate reduces the exposure of low-credibility content because of the composition of its predicted engagement, which supports Hypothesis~\ref{prop:targeting}.

\subsection{Robustness Checks}
\label{sec:methods-robust}
\label{sec:robust}

In this section, we check that our findings do not depend on a specific configuration, by varying the configuration in eight ways and varying the hyperparameters of the gate, which gives the nine checks summarized in Table~\ref{tab:robustsummary}, with their details reported in Appendix~\ref{app:robust}. We find that the observed pattern holds in all 46 variations and at all 23 hyperparameter settings, with every exposure contrast significant in the predicted direction and the parameter-only control always moving in the opposite direction. It is worth noting that the reflective threshold is the strongest of the three reflective-gated forms, producing $2.3\times$ the exposure effect of the multiplicative gate and the largest cascade size contrast, and we carry it forward as our solution in Section~\ref{sec:solution}.

\begin{table}[h]
\centering
\small
\caption{Summary of robustness checks.}
\label{tab:robustsummary}
\begin{tabular}{c >{\raggedright\arraybackslash}p{0.36\linewidth} >{\raggedright\arraybackslash}p{0.44\linewidth}}
\toprule
Check & Variation & Finding \\
\midrule
1 & Aggregation form (F1--F3 and the retuned control) & All three reflective-gated forms narrow the gap significantly, the control moves it the opposite way \\
2 & Slow/fast partition (four assignments) & Significantly negative under all four \\
3 & Ranker training seed (five retrainings) & Contrasts cluster in $[-21.3, -15.5]$, every one significant \\
4 & Corpus period (adjacent week-long slice) & $-16.3$ on the adjacent slice against $-16.1$ at baseline \\
5 & Partition $\times$ training seed ($4\times5$ combinations) & Significantly negative in all 20 combinations \\
6 & User-pool size (25K, 50K, 100K users) & Identical contrasts at all three sizes \\
7 & Training-corpus size (2M and 5M tweets) & Strengthens monotonically at both layers \\
8 & Ranker design (four feature subsets, MLP, transformer) & Sign pattern holds in every variant \\
9 & Gate hyperparameters (23 grid points) & Significantly negative at every point, growing with $\alpha$ and $\theta$ \\
\bottomrule
\end{tabular}
\end{table}

To close the section, Table~\ref{tab:hypotheses} summarizes the verdicts on the three hypotheses. We find that all three are supported, and that Hypothesis~\ref{prop:closure} holds both for exposure and for cascade size.

\begin{table}[H]
\centering
\small
\caption{Summary of hypothesis-test results.}
\label{tab:hypotheses}
\begin{tabular}{>{\raggedright\arraybackslash}p{0.22\linewidth} >{\raggedright\arraybackslash}p{0.13\linewidth} >{\raggedright\arraybackslash}p{0.53\linewidth}}
\toprule
\textbf{Hypothesis} & \textbf{Sections} & \textbf{Result} \\
\midrule
H1: The aggregation form narrows the gap & \ref{sec:results-stage2}, \ref{sec:robust} & Supported at both layers, on exposure under all three reflective-gated forms and in all 46 robustness variations, and on cascade size under the baseline ranker. \\
\midrule
H2: Re-tuning the weights does not help & \ref{sec:results-stage2}, \ref{sec:sign-asymmetry} & Supported, as the retuned control moves the gap in the opposite direction and no weight configuration reaches the threshold form's gap reduction. \\
\midrule
H3: The gate acts on the engagement mix & \ref{sec:counts} & Supported, as demotion tracks the predicted reactive-to-reflective ratio, and the placebo gate and the correlation analysis show that neither alternative explanation accounts for the targeting. \\
\bottomrule
\end{tabular}
\end{table}

% ==============================================================
\section{Validating Our Proposed Solution}
\label{sec:solution}
% ==============================================================

\subsection{The Solution}
\label{sec:operating}

We now evaluate the effectiveness of the reflective-threshold gate as a solution. Under this gate, the recommendation score of a tweet is scaled down when the ranker predicts that the tweet will draw little thoughtful engagement, such as replies and quotes, and the threshold $\theta$ sets how much thoughtful engagement a tweet needs to avoid this reduction, so that a higher $\theta$ is more demanding. To measure the benefit of the gate, we compare the share of total exposure that low-credibility tweets receive under the gate with the share they receive under the additive rule. As shown in Figure~\ref{fig:operating}, we find that the benefit grows steadily as $\theta$ increases, with the exposure share of low-credibility tweets falling by about $8\%$ once $\theta$ reaches $2$. It is worth emphasizing that high-credibility tweets do not lose exposure at any setting, since they are more likely to pass the threshold, and receive the exposure that low-credibility tweets lose. Therefore, a platform would not need to sacrifice the exposure of mainstream content in order to reduce the exposure of low-credibility content.

A platform would also want to know whether the gate reduces the engagement that its feed generates, and we find that it does not for the following two reasons. First, a feed shows its highest-scored tweets first, and we find that the top 10\% of tweets under the gate are the same tweets as under the additive rule, while at least 98\% of the top 20\% are the same. The gate therefore mainly reorders tweets in the middle of the ranking, where the tweets that rely on instant reactions are located, and leaves the top of the feed unchanged. Second, the tweets that gain exposure under the gate are the tweets that the ranker expects to draw more engagement, so that the same amount of exposure would produce more engagement. Specifically, when the total exposure is held fixed and allocated in proportion to each tweet's score, the expected engagement per exposure rises by about 17\% at $\theta = 1$. Table~\ref{tab:engcost} reports these measures for each value of $\theta$.

\begin{table}[h]
\centering
\caption{Benefit and engagement cost of the reflective-threshold gate at different values of $\theta$.}
\label{tab:engcost}
\small
\begin{tabular}{ccccc}
\toprule
$\theta$ & \makecell{Low-credibility\\exposure share} & \makecell{High-credibility\\exposure share} & \makecell{Expected engagement\\per exposure} & \makecell{Top-20\%\\overlap} \\
\midrule
$0.5$ & $-2.9\%$ & $+2.3\%$ & $+9.4\%$  & $99.7\%$ \\
$1.0$ & $-5.0\%$ & $+3.9\%$ & $+17.4\%$ & $99.3\%$ \\
$1.5$ & $-7.0\%$ & $+5.5\%$ & $+27.2\%$ & $99.0\%$ \\
$2.0$ & $-8.1\%$ & $+6.4\%$ & $+37.4\%$ & $98.6\%$ \\
$2.2$ & $-8.2\%$ & $+6.5\%$ & $+41.4\%$ & $98.4\%$ \\
\bottomrule
\end{tabular}
\par\vspace{2pt}{\SingleSpacedXI\footnotesize \textit{Note.} Exposure shares and expected engagement per exposure are changes relative to the additive rule at the same total exposure.\par}
\end{table}

It is worth noting that about 97\% of the tweets in our corpus carry no credibility label, so the results above do not show how the gate affects them. To examine this, we mix 5{,}000 unlabeled tweets with 5{,}000 labeled ones and compare the position of each tweet in the ranking with and without the gate. We find that unlabeled tweets move slightly up on average, and although they make up half of the mixed set, they account for only 18\% of the 10\% of tweets that the gate moves down the most, while the unlabeled tweets that the gate does move down are those that rely more on instant reactions. In addition, since the benefit grows only slowly once $\theta$ exceeds $1.5$, any value of $\theta$ between 1 and 2 would achieve most of the reduction. Therefore, $\theta$ becomes the single number that a platform needs to set or optimize based on its own target.

\begin{figure}[t]
\centering
\includegraphics[width=0.5\linewidth]{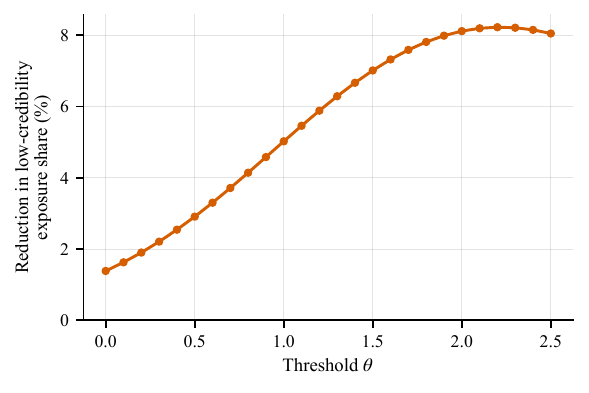}
\caption{Reduction in the exposure share of low-credibility tweets as the threshold $\theta$ rises.}
\label{fig:operating}
\end{figure}

\subsection{Evaluation and Validation}
\label{sec:validation}

A natural concern about our solution is that the result may be circular, since a gate that requires thoughtful engagement would, by design, hold back the content that draws the least of it, and low-credibility content is part of that content. To address this concern, we validate the gate with the following three checks. First, the \emph{targeting} check examines whether the gate moves each tweet down according to how much of its score comes from instant reactions, and not simply according to its credibility class. Second, the \emph{propagation} check examines whether this effect on the ranking also narrows the credibility gap in how far content spreads in the simulation. Third, the \emph{claim-level} check examines whether the gate also moves down tweets whose individual claims are rated as misleading.

\subsubsection{Targeting and Propagation}

To start with, we expect that the quantity the gate should act on is the share of a tweet's score that comes from fast signals, which we call its \emph{substitutability share}. As shown in Figure~\ref{fig:validation}a, we find that the higher the substitutability share of a tweet, that is, the more of its score comes from predicted likes and retweets, the further the gate moves it down. Specifically, low-credibility tweets have a higher substitutability share on average, which suggests that the additive rule ranks them highly partly because of instant reactions. As a result, low-credibility tweets make up $61\%$ of the quarter of tweets that the gate moves down the most, against $44\%$ of the quarter that it moves down the least.

In addition, the same relationship holds when we look at low-credibility and high-credibility tweets separately, which indicates that the gate does not treat the two classes differently, and moves low-credibility tweets down more only because more of them have a high substitutability share. The gate also barely moves tweets that are predicted to draw plenty of thoughtful engagement, in either class, which distinguishes our gate from a simple penalty on low-credibility sources. The propagation check leads to the same conclusion, as we find that the gate significantly narrows the gap in cascade size between low- and high-credibility content (Table~\ref{tab:arch}), so the effect on the ranking also carries over to how far content spreads in the simulation.

\subsubsection{The Claim-Level Limit}

For the claim-level check, we use 400 labeled tweets whose claims were rated by an advanced LLM as misleading, unclear, or not misleading, without access to the credibility of their linked sources. We find that the gate does not move the tweets rated misleading down more than the other tweets, as shown in Figure~\ref{fig:validation}b, since these tweets do not rely especially on instant reactions, and the gate in fact moves them down slightly less. Therefore, our gate reduces the advantage that low-credibility sources gain from instant reactions, while it has little connection to whether an individual claim is true in this corpus, so we do not present it as a detector of misinformation.

\begin{figure}[t]
\centering
\includegraphics[width=0.8\linewidth]{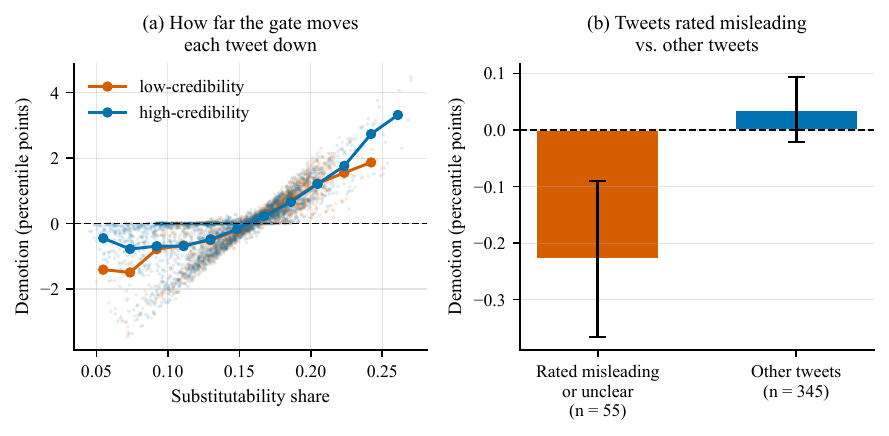}
\caption{Targeting validation. (a)~How far the reflective threshold moves each tweet down, against its substitutability share. (b)~How far it moves the tweets rated misleading and the other tweets down.}
\label{fig:validation}
\end{figure}

We next ask whether a stronger rule would pass the claim-level check, and we test two such rules that we fixed in advance. The first rule multiplies predicted thoughtful engagement and predicted instant reactions together, $S = S_{\mathrm{slow}}^{0.75}\,S_{\mathrm{fast}}^{0.25}$, so that a tweet needs both kinds of engagement to score highly, while the second rule multiplies the reflective-threshold score by a score built from the account and content features of Section~\ref{sec:partition}. We find that both rules appear to move misleading tweets down more, yet neither does so for the reason that our mechanism describes. Specifically, the first rule mostly moves down tweets that draw little engagement of any kind, since the product is small whenever either kind of engagement is small. The second rule works through the added feature score and not through how the engagement predictions are combined, and its effect is not statistically significant in this sample. The reason for this limit is that the engagement predictions carry little information about whether a claim is true. Therefore, while changing how the engagement predictions are combined is an effective way to address the amplification of instant reactions, passing the claim-level check would require giving the ranker additional inputs that carry information about the claims themselves.

To make sure that this limit does not come from using an LLM to rate the claims, we repeat the check with Community Notes, the crowd-sourced fact-checking notes that X makes public. Matching these notes against the tweets in our corpus gives 851 tweets with at least one note, 805 of which are rated as misleading, and these notes were written without any input from our models or labels. Compared with similar tweets without a note, matched on whether they are original posts and on how much engagement they draw, we find that the tweets with a note rely significantly less on instant reactions, so the reflective threshold would move them down slightly less than similar tweets. This result confirms the same limit with an independent set of labels that is twice as large as the claim-rated set, and we therefore test in the next subsection whether adding a content signal would overcome it.

\subsubsection{Extending the Gate with a Content Signal}
We subsequently ask whether the gate can use an additional input that does carry information about the claim, without losing the targeting we validated above. As a proof of concept, we train a simple text classifier on the 400 rated tweets to estimate how likely a tweet is to be misleading, and we always score each tweet with a version of the classifier that was not trained on it. We then multiply the reflective-threshold score by a content gate, $S = S_{\theta}\cdot g^{\eta}$, where $S_{\theta}$ is the reflective-threshold score, $g$ falls from one toward zero as the classifier considers a tweet more likely to be misleading, and the exponent $\eta$ sets how much weight the content gate receives. We find that at full weight ($\eta = 1$), the combined score passes the claim-level check but reverses the targeting result, since the tweets that the classifier flags tend to rely less on instant reactions, so that a content gate with full weight would fix the claim-level problem only by undoing the targeting.

However, since the reflective threshold acts on predicted engagement while the content gate acts on the text of the tweet, the two gates work largely independently of each other, so that giving the content gate a small weight would be able to keep both results (Figure~\ref{fig:frontier}). Over a range of $\eta$ values fixed in advance, we find that at $\eta = 0.15$, the tweets with the highest substitutability share are still moved down almost as much as without the content gate ($97\%$ of their original demotion), and the targeting retention in Figure~\ref{fig:frontier}b stays at $84\%$, while misleading tweets are now moved down significantly more than the other tweets. In addition, a variant that applies the content gate only to the $10\%$ of tweets that the classifier considers most likely to be misleading moves the tweets with the highest substitutability share down by the same amount as the reflective threshold alone. The content gate with a small weight therefore adds the ability to move misleading tweets down while keeping the targeting of the reflective threshold, so that the two gates complement each other. Overall, the reflective threshold remains our main solution, while the content gate shows that it can be combined with additional content signals, as we discuss in Section~\ref{sec:practice}.

Since the analysis above only examines how the combined score reorders tweets, we also run the combined score through the full cascade simulation of Section~\ref{sec:results-stage2}, using the variants fixed in advance in a single run, to see how far content spreads under it. We find that at $\eta = 0.15$, the combined score changes the exposure gap by $-38.2$ (s.e.\ $5.41$), which is not statistically different from the change under the reflective threshold alone, so adding the content gate does not weaken the main result. In addition, the gap in cascade size narrows significantly under every variant of the combined score, and the variant that applies the content gate only to the most suspicious tweets narrows the exposure gap even further ($-43.2$). It is worth noting that the content gate gives similar values to low- and high-credibility tweets, which confirms that the two gates act through separate channels, as the reflective threshold narrows the credibility gap while the content gate moves down misleading tweets without widening the gap again.

\begin{figure}[t]
\centering
\includegraphics[width=0.85\linewidth]{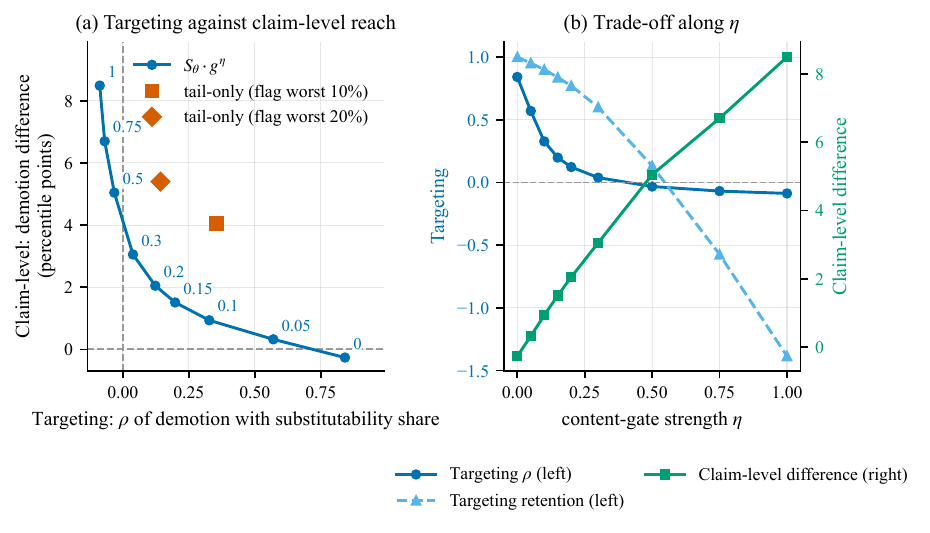}
\caption{The reflective threshold combined with the content gate. (a)~Targeting strength against how much further misleading tweets are moved down, for different weights $\eta$ of the content gate. (b)~Targeting and the claim-level difference as $\eta$ increases.}
\label{fig:frontier}
\end{figure}

% ==============================================================
\section{Discussion and Conclusion}
\label{sec:discussion}
% ==============================================================

\subsection{Implications for Researchers}
Our findings lead to the following implications for researchers in the academic community.

First, we locate the mechanism of differential misinformation propagation in a specific component, the score-aggregation layer, and identify the \emph{engagement fungibility} mechanism, which treats fast- and slow-class engagement as interchangeable evidence of quality. Previous studies at content-level \citep{vosoughi2018spread,brady2017emotion} and user-level \citep{pennycook2019lazy} explain why some content provokes reactive engagement, while we explain how a recommender turns this reactive engagement into differential exposure. As a result, the platform will be able to respond accordingly.

Second, our results add to the dual-process theory of recommender systems \citep{kleinberg2024challenge,agarwal2024system2}. These studies argue, using simple theoretical models, that a recommender that optimizes engagement favors the impulsive (System 1) responses of users over their reflective (System 2) responses, and they treat this as an intrinsic bias of engagement-based ranking. In contrast, we find that the bias comes from the way the engagement predictions are combined into a score. Every combination rule that requires predicted thoughtful engagement before predicted instant reactions can raise the score of a tweet narrows the credibility gap (Hypothesis~\ref{prop:closure}). Therefore, the bias would be located in one layer of the recommender that a platform can replace, and it would not require giving up engagement-based ranking.

Third, our study answers the question of what would happen to the same content if one part of the recommender were changed, by reconstructing a deployed recommender and changing one component inside a simulation that is calibrated to observed data. In addition, the way we estimate the dispersion of engagement counts would be useful for any simulation of engagement that has to fit a distribution to counts that are mostly zero and occasionally extremely large.

Finally, our results suggest that empirical studies of misinformation should examine the composition of engagement and not only its amount. In our corpus, high-credibility content receives more engagement in total, yet low-credibility content receives a larger share of its engagement from instant reactions, both in the observed counts and in the predictions of the ranker, and it is this composition that the gate responds to. Therefore, a study that compares only the total engagement would conclude that high-credibility content has the advantage, and would miss the difference that determines how the algorithm treats them.

\subsection{Implications for Practitioners}
\label{sec:practice}

For platforms, our findings suggest that the differential propagation of reactive content comes from a design choice in the recommender, which a platform would be able to change. Specifically, the property that produces the gap lies in the layer that combines the engagement predictions into a score, which a platform can change without retraining its models or changing its published weights.

A second implication is that the intervention spares the platform from having to rule on what is true. The gate acts on a predicted engagement pattern and not on a veracity label, so a platform need not decide which content is false, but only which engagement pattern it is unwilling to treat as evidence of quality. This is important in practice, since fact-checking at scale is costly and too slow to inform ranking. Platforms that already run content classifiers over posts, as X does with its safety and policy classifiers \citep{xai2026algorithm}, need not choose between the two approaches, because the engagement gate and a content signal work through separate channels and can be combined, while a small weight on the content signal limits how much an imperfect classifier can change what users see.

Deployment is also cheap, since the threshold is applied after scoring, to quantities the ranker already produces, so that it adds little to the cost of serving a feed. The gate would also be harder to manipulate than the additive rule, since it relies on predicted thoughtful engagement, such as replies and composed quote-tweets, which is far more costly for an adversary to produce at scale than one-click reactions, so that content would need engagement that takes effort to be amplified.

A final implication is associated with the incentive of a platform to adopt the gate. The economics of content moderation suggests that a platform also bears part of the cost of low-accuracy content, through advertiser pressure, regulation, and the loss of user trust, so that its best policy usually gives up some engagement in exchange for accuracy \citep{papanastasiou2020fake,candogan2020optimal}. Our findings suggest that this exchange would be more favorable than a platform might expect, since Figure~\ref{fig:operating} shows how much exposure the gate takes away from low-credibility content at each value of the threshold, while our simulation finds no loss of engagement at any threshold in Section~\ref{sec:operating}, as the gate leaves the top of the feed unchanged and moves exposure toward the tweets that are expected to draw more engagement. Therefore, a platform would be able to adopt the gate without sacrificing its engagement objective, which it would be able to verify on its own traffic.

\subsection{Limitations and Future Research}
\label{sec:limits}
\label{sec:conclusion}

Before turning to future research, we note the boundaries within which our claims hold. The first concerns the recommender we study, which is reconstructed and not replicated, since our four-objective ranker trains on far less data than the production system. Although the direction of the effect stays the same across ranker designs, and its size grows as the ranker is trained on more data (Checks~7 and~8), a production-grade ranker could behave differently, and since the simulator keeps the trained ranker fixed without retraining, our effect sizes describe the immediate change and not a new long-run equilibrium. The second concerns measurement, as exposure records what the algorithm distributes and not what people ultimately see. The third concerns the data, as our labels measure the credibility of the source and not the truth of each claim, so we show that the gate targets the reactive engagement pattern of low-credibility sources, while the content gate reaches misleading claims only as a proof of concept, as explained in Section~\ref{sec:validation}. In addition, the corpus is a keyword-filtered sample of a single platform over a single period.

These limitations suggest several directions for future research. First, replicating the content gate with sufficient statistical power would require claim-level labels for a few thousand tweets and a production-grade credibility model, together with the richer engagement objectives of the production system and Community Notes as inputs. Second, a field test on a cooperating platform would be able to measure the size of the effect in real-world applications. In addition, cross-platform studies would be able to test the cognitive-effort partition for the engagement actions of TikTok, YouTube, and Facebook, while an audit of past changes to score aggregation would be able to test the mechanism with observational data.

% ==============================================================
\bibliographystyle{informs2014}
\bibliography{references}

@inproceedings{agarwal2024system2,
  author    = {Agarwal, Arpit and Usunier, Nicolas and Lazaric, Alessandro and Nickel, Maximilian},
  title     = {System-2 Recommenders: Disentangling Utility and Engagement in Recommendation Systems via Temporal Point-Processes},
  booktitle = {Proceedings of the 2024 ACM Conference on Fairness, Accountability, and Transparency (FAccT '24)},
  year      = {2024},
  pages     = {1763--1773},
  address   = {Rio de Janeiro, Brazil},
  publisher = {ACM}
}

@inproceedings{balasubramanian2024usc,
  author    = {Balasubramanian, Ashwin and Zou, Vito and Narayana, Hitesh and You, Christina and Luceri, Luca and Ferrara, Emilio},
  title     = {A Public Dataset Tracking Social Media Discourse about the 2024 {U.S.} Presidential Election on {Twitter/X}},
  booktitle = {Workshop Proceedings of the 19th International AAAI Conference on Web and Social Media (CySoc 2025)},
  year      = {2025},
  note      = {Dataset: \url{https://github.com/sinking8/usc-x-24-us-election}}
}

@article{bandy2021curating,
  author  = {Bandy, Jack and Diakopoulos, Nicholas},
  title   = {Curating Quality? {H}ow {Twitter}'s Timeline Algorithm Treats Different Types of News},
  journal = {Social Media + Society},
  volume  = {7},
  number  = {3},
  year    = {2021},
  pages   = {20563051211041648}
}

@article{brady2017emotion,
  author  = {Brady, William J. and Wills, Julian A. and Jost, John T. and Tucker, Joshua A. and Van Bavel, Jay J.},
  title   = {Emotion shapes the diffusion of moralized content in social networks},
  journal = {Proceedings of the National Academy of Sciences},
  volume  = {114},
  number  = {28},
  pages   = {7313--7318},
  year    = {2017}
}

@article{corsi2023evaluating,
  author  = {Corsi, Giulio},
  title   = {Evaluating {Twitter}'s Algorithmic Amplification of Low-Credibility Content: An Observational Study},
  journal = {EPJ Data Science},
  volume  = {13},
  pages   = {18},
  year    = {2024}
}

@article{evans2013dual,
  author  = {Evans, Jonathan St.~B.~T. and Stanovich, Keith E.},
  title   = {Dual-process theories of higher cognition: Advancing the debate},
  journal = {Perspectives on Psychological Science},
  volume  = {8},
  number  = {3},
  pages   = {223--241},
  year    = {2013}
}

@misc{funke2026iffy,
  author       = {Golding, Barrett},
  title        = {{Iffy+} Mis/Disinfo Sites},
  howpublished = {\url{https://iffy.news/iffy-plus/}},
  year         = {2026},
  note         = {Accessed September 2026}
}

@article{huszar2022algorithmic,
  author  = {Husz{\'a}r, Ferenc and Ktena, Sofia Ira and O'Brien, Conor and Belli, Luca and Schlaikjer, Andrew and Hardt, Moritz},
  title   = {Algorithmic amplification of politics on {Twitter}},
  journal = {Proceedings of the National Academy of Sciences},
  volume  = {119},
  number  = {1},
  pages   = {e2025334119},
  year    = {2022}
}

@book{kahneman2011thinking,
  author    = {Kahneman, Daniel},
  title     = {Thinking, Fast and Slow},
  publisher = {Farrar, Straus and Giroux},
  address   = {New York},
  year      = {2011}
}

@article{kleinberg2024challenge,
  author  = {Kleinberg, Jon and Mullainathan, Sendhil and Raghavan, Manish},
  title   = {The Challenge of Understanding What Users Want: Inconsistent Preferences and Engagement Optimization},
  journal = {Management Science},
  volume  = {70},
  number  = {9},
  pages   = {6336--6355},
  year    = {2024}
}

@article{lin2023high,
  author  = {Lin, Hause and Lasser, Jana and Lewandowsky, Stephan and Cole, Rocky and Gully, Andrew and Rand, David G. and Pennycook, Gordon},
  title   = {High level of correspondence across different news domain quality rating sets},
  journal = {PNAS Nexus},
  volume  = {2},
  number  = {9},
  pages   = {pgad286},
  year    = {2023}
}

@misc{mbfc2026methodology,
  author       = {{Media Bias/Fact Check}},
  title        = {Methodology},
  howpublished = {\url{https://mediabiasfactcheck.com/methodology/}},
  year         = {2026},
  note         = {Accessed September 2026}
}

@article{pennycook2019lazy,
  author  = {Pennycook, Gordon and Rand, David G.},
  title   = {Lazy, not biased: Susceptibility to partisan fake news is better explained by lack of reasoning than by motivated reasoning},
  journal = {Cognition},
  volume  = {188},
  pages   = {39--50},
  year    = {2019}
}

@article{pennycook2021psychology,
  author  = {Pennycook, Gordon and Rand, David G.},
  title   = {The Psychology of Fake News},
  journal = {Trends in Cognitive Sciences},
  volume  = {25},
  number  = {5},
  pages   = {388--402},
  year    = {2021}
}

@article{stanovich2000individual,
  author  = {Stanovich, Keith E. and West, Richard F.},
  title   = {Individual differences in reasoning: Implications for the rationality debate?},
  journal = {Behavioral and Brain Sciences},
  volume  = {23},
  number  = {5},
  pages   = {645--665},
  year    = {2000}
}

@inproceedings{ribeiro2020auditing,
  author    = {Ribeiro, Manoel Horta and Ottoni, Raphael and West, Robert and Almeida, Virg{\'i}lio A. F. and Meira Jr., Wagner},
  title     = {Auditing Radicalization Pathways on {YouTube}},
  booktitle = {Proceedings of the 2020 Conference on Fairness, Accountability, and Transparency (FAT* '20)},
  publisher = {ACM},
  pages     = {131--141},
  year      = {2020}
}

@article{hosseinmardi2021examining,
  author  = {Hosseinmardi, Homa and Ghasemian, Amir and Clauset, Aaron and Mobius, Markus and Rothschild, David M. and Watts, Duncan J.},
  title   = {Examining the Consumption of Radical Content on {YouTube}},
  journal = {Proceedings of the National Academy of Sciences},
  volume  = {118},
  number  = {32},
  pages   = {e2101967118},
  year    = {2021}
}

@article{gonzalez2023asymmetric,
  author  = {Gonz{\'a}lez-Bail{\'o}n, Sandra and Lazer, David and Barber{\'a}, Pablo and Zhang, Meiqing and Allcott, Hunt and Brown, Taylor and Crespo-Tenorio, Adriana and Freelon, Deen and Gentzkow, Matthew and Guess, Andrew M. and others},
  title   = {Asymmetric Ideological Segregation in Exposure to Political News on {Facebook}},
  journal = {Science},
  volume  = {381},
  number  = {6656},
  pages   = {392--398},
  year    = {2023}
}

@article{nyhan2023like,
  author  = {Nyhan, Brendan and Settle, Jaime and Thorson, Emily and Wojcieszak, Magdalena and Barber{\'a}, Pablo and Chen, Annie Y. and Allcott, Hunt and Brown, Taylor and Crespo-Tenorio, Adriana and Dimmery, Drew and others},
  title   = {Like-Minded Sources on {Facebook} Are Prevalent but Not Polarizing},
  journal = {Nature},
  volume  = {620},
  number  = {7972},
  pages   = {137--144},
  year    = {2023}
}

@article{levy2021social,
  author  = {Levy, Ro'ee},
  title   = {Social Media, News Consumption, and Polarization: Evidence from a Field Experiment},
  journal = {American Economic Review},
  volume  = {111},
  number  = {3},
  pages   = {831--870},
  year    = {2021}
}

@misc{twitter2023algorithm,
  author       = {{Twitter Engineering}},
  title        = {The Algorithm},
  howpublished = {GitHub: \url{https://github.com/twitter/the-algorithm}},
  note         = {Announcement: \url{https://blog.x.com/en_us/topics/company/2023/a-new-era-of-transparency-for-twitter}},
  year         = {2023}
}

@inproceedings{fernandez2024analysing,
  author    = {Fern{\'a}ndez, Miriam and Bellog{\'i}n, Alejandro and Cantador, Iv{\'a}n},
  title     = {Analysing the Effect of Recommendation Algorithms on the Spread of Misinformation},
  booktitle = {Proceedings of the 16th ACM Web Science Conference (WebSci)},
  year      = {2024},
  pages     = {159--169}
}

@article{pathak2023understanding,
  author  = {Pathak, Royal and Spezzano, Francesca and Pera, Maria Soledad},
  title   = {Understanding the Contribution of Recommendation Algorithms on Misinformation Recommendation and Misinformation Dissemination on Social Networks},
  journal = {ACM Transactions on the Web},
  volume  = {17},
  number  = {4},
  pages   = {35:1--35:26},
  year    = {2023}
}

@article{acemoglu2024model,
  author  = {Acemoglu, Daron and Ozdaglar, Asuman and Siderius, James},
  title   = {A Model of Online Misinformation},
  journal = {Review of Economic Studies},
  volume  = {91},
  number  = {6},
  pages   = {3117--3150},
  year    = {2024}
}

@article{milli2025engagement,
  author  = {Milli, Smitha and Carroll, Micah and Wang, Yike and Pandey, Sashrika and Zhao, Sebastian and Dragan, Anca D.},
  title   = {Engagement, User Satisfaction, and the Amplification of Divisive Content on Social Media},
  journal = {PNAS Nexus},
  volume  = {4},
  number  = {3},
  pages   = {pgaf062},
  year    = {2025}
}

@article{guess2023algorithms,
  author  = {Guess, Andrew M. and Malhotra, Neil and Pan, Jennifer and Barber{\'a}, Pablo and Allcott, Hunt and Brown, Taylor and Crespo-Tenorio, Adriana and Dimmery, Drew and Freelon, Deen and Gentzkow, Matthew and Gonz{\'a}lez-Bail{\'o}n, Sandra and others},
  title   = {How Do Social Media Feed Algorithms Affect Attitudes and Behavior in an Election Campaign?},
  journal = {Science},
  volume  = {381},
  number  = {6656},
  pages   = {398--404},
  year    = {2023}
}

@article{truong2024vulnerabilities,
  author  = {Truong, Bao Tran and Lou, Xiaodan and Flammini, Alessandro and Menczer, Filippo},
  title   = {Quantifying the Vulnerabilities of the Online Public Square to Adversarial Manipulation Tactics},
  journal = {PNAS Nexus},
  volume  = {3},
  number  = {7},
  pages   = {pgae258},
  year    = {2024}
}

@misc{xai2026algorithm,
  author       = {{xAI}},
  title        = {x-algorithm: Algorithm Powering the {For You} Feed on {X}},
  howpublished = {GitHub: \url{https://github.com/xai-org/x-algorithm}},
  year         = {2026},
  note         = {Released January 2026; ranking weights published August 2026. Accessed September 2026}
}

@article{vosoughi2018spread,
  author  = {Vosoughi, Soroush and Roy, Deb and Aral, Sinan},
  title   = {The spread of true and false news online},
  journal = {Science},
  volume  = {359},
  number  = {6380},
  pages   = {1146--1151},
  year    = {2018}
}

@inproceedings{wang2021masknet,
  author    = {Wang, Zhiqiang and She, Qingyun and Zhang, Junlin},
  title     = {{MaskNet}: Introducing Feature-Wise Multiplication to {CTR} Ranking Models by Instance-Guided Mask},
  booktitle = {Proceedings of the 3rd Workshop on Deep Learning Practice for High-Dimensional Sparse Data (DLP-KDD 2021)},
  year      = {2021},
  publisher = {ACM},
  note      = {arXiv:2102.07619}
}

@inproceedings{ye2025auditing,
  author    = {Ye, Jinyi and Luceri, Luca and Ferrara, Emilio},
  title     = {Auditing Political Exposure Bias: Algorithmic Amplification on {Twitter/X} During the 2024 {U.S.} Presidential Election},
  booktitle = {Proceedings of the 2025 ACM Conference on Fairness, Accountability, and Transparency (FAccT '25)},
  year      = {2025},
  pages     = {2349--2362},
  publisher = {ACM}
}

@inproceedings{edelson2021understanding,
  author    = {Edelson, Laura and Nguyen, Minh-Kha and Goldstein, Ian and Goga, Oana and McCoy, Damon and Lauinger, Tobias},
  title     = {Understanding engagement with {U.S.} (mis)information news sources on {Facebook}},
  booktitle = {Proceedings of the 21st ACM Internet Measurement Conference (IMC '21)},
  year      = {2021},
  pages     = {444--463}
}

@article{juul2021comparing,
  author  = {Juul, Jonas L. and Ugander, Johan},
  title   = {Comparing information diffusion mechanisms by matching on cascade size},
  journal = {Proceedings of the National Academy of Sciences},
  volume  = {118},
  number  = {46},
  pages   = {e2100786118},
  year    = {2021}
}

@article{bouchaud2023crowdsourced,
  author  = {Bouchaud, Paul and Chavalarias, David and Panahi, Maziyar},
  title   = {Crowdsourced audit of {Twitter}'s recommender systems},
  journal = {Scientific Reports},
  volume  = {13},
  number  = {1},
  pages   = {16815},
  year    = {2023}
}

@book{miller2007complex,
  author    = {Miller, John H. and Page, Scott E.},
  title     = {Complex Adaptive Systems: An Introduction to Computational Models of Social Life},
  publisher = {Princeton University Press},
  address   = {Princeton, NJ},
  year      = {2007}
}

@article{zhang2020consumption,
  author  = {Zhang, Jingjing and Adomavicius, Gediminas and Gupta, Alok and Ketter, Wolfgang},
  title   = {Consumption and Performance: Understanding Longitudinal Dynamics of Recommender Systems via an Agent-Based Simulation Framework},
  journal = {Information Systems Research},
  volume  = {31},
  number  = {1},
  pages   = {76--101},
  year    = {2020}
}

@article{fleder2009blockbuster,
  author  = {Fleder, Daniel and Hosanagar, Kartik},
  title   = {Blockbuster Culture's Next Rise or Fall: The Impact of Recommender Systems on Sales Diversity},
  journal = {Management Science},
  volume  = {55},
  number  = {5},
  pages   = {697--712},
  year    = {2009}
}

@article{moravec2020appealing,
  author  = {Moravec, Patricia L. and Kim, Antino and Dennis, Alan R.},
  title   = {Appealing to Sense and Sensibility: {System 1} and {System 2} Interventions for Fake News on Social Media},
  journal = {Information Systems Research},
  volume  = {31},
  number  = {3},
  pages   = {987--1006},
  year    = {2020}
}

@article{drolsbach2023diffusion,
  author  = {Drolsbach, Chiara Patricia and Pr{\"o}llochs, Nicolas},
  title   = {Diffusion of Community Fact-Checked Misinformation on {Twitter}},
  journal = {Proceedings of the ACM on Human-Computer Interaction},
  year    = {2023},
  volume  = {7},
  number  = {CSCW2},
  pages   = {1--22}
}

@article{proellochs2023mechanisms,
  author  = {Pr{\"o}llochs, Nicolas and Feuerriegel, Stefan},
  title   = {Mechanisms of True and False Rumor Sharing in Social Media: Collective Intelligence or Herd Behavior?},
  journal = {Proceedings of the ACM on Human-Computer Interaction},
  year    = {2023},
  volume  = {7},
  number  = {CSCW2},
  pages   = {1--38}
}

@article{mcloughlin2024misinformation,
  author  = {McLoughlin, Killian L. and Brady, William J. and Goolsbee, Aden and Kaiser, Ben and Klonick, Kate and Crockett, M. J.},
  title   = {Misinformation Exploits Outrage to Spread Online},
  journal = {Science},
  year    = {2024},
  volume  = {386},
  number  = {6725},
  pages   = {991--996}
}

@article{kim2019says,
  author  = {Kim, Antino and Dennis, Alan R.},
  title   = {Says Who? {T}he Effects of Presentation Format and Source Rating on Fake News in Social Media},
  journal = {MIS Quarterly},
  volume  = {43},
  number  = {3},
  pages   = {1025--1039},
  year    = {2019}
}

@article{moravec2019fake,
  author  = {Moravec, Patricia L. and Minas, Randall K. and Dennis, Alan R.},
  title   = {Fake News on Social Media: People Believe What They Want to Believe When it Makes No Sense At All},
  journal = {MIS Quarterly},
  volume  = {43},
  number  = {4},
  pages   = {1343--1360},
  year    = {2019}
}

@article{kim2019combating,
  author  = {Kim, Antino and Moravec, Patricia L. and Dennis, Alan R.},
  title   = {Combating Fake News on Social Media with Source Ratings: The Effects of User and Expert Reputation Ratings},
  journal = {Journal of Management Information Systems},
  volume  = {36},
  number  = {3},
  pages   = {931--968},
  year    = {2019}
}

@article{oh2013community,
  author  = {Oh, Onook and Agrawal, Manish and Rao, H. Raghav},
  title   = {Community Intelligence and Social Media Services: A Rumor Theoretic Analysis of Tweets During Social Crises},
  journal = {MIS Quarterly},
  volume  = {37},
  number  = {2},
  pages   = {407--426},
  year    = {2013}
}

@article{adomavicius2013do,
  author  = {Adomavicius, Gediminas and Bockstedt, Jesse C. and Curley, Shawn P. and Zhang, Jingjing},
  title   = {Do Recommender Systems Manipulate Consumer Preferences? {A} Study of Anchoring Effects},
  journal = {Information Systems Research},
  volume  = {24},
  number  = {4},
  pages   = {956--975},
  year    = {2013}
}

@article{hosanagar2014will,
  author  = {Hosanagar, Kartik and Fleder, Daniel M. and Lee, Dokyun and Buja, Andreas},
  title   = {Will the Global Village Fracture Into Tribes? {R}ecommender Systems and Their Effects on Consumer Fragmentation},
  journal = {Management Science},
  volume  = {60},
  number  = {4},
  pages   = {805--823},
  year    = {2014}
}

@article{lee2019how,
  author  = {Lee, Dokyun and Hosanagar, Kartik},
  title   = {How Do Recommender Systems Affect Sales Diversity? {A} Cross-Category Investigation via Randomized Field Experiment},
  journal = {Information Systems Research},
  volume  = {30},
  number  = {1},
  pages   = {239--259},
  year    = {2019}
}

@article{susarla2012social,
  author  = {Susarla, Anjana and Oh, Jeong-Ha and Tan, Yong},
  title   = {Social Networks and the Diffusion of User-Generated Content: Evidence from {YouTube}},
  journal = {Information Systems Research},
  volume  = {23},
  number  = {1},
  pages   = {23--41},
  year    = {2012}
}

@article{papanastasiou2020fake,
  author  = {Papanastasiou, Yiangos},
  title   = {Fake News Propagation and Detection: A Sequential Model},
  journal = {Management Science},
  volume  = {66},
  number  = {5},
  pages   = {1826--1846},
  year    = {2020}
}

@article{candogan2020optimal,
  author  = {Candogan, Ozan and Drakopoulos, Kimon},
  title   = {Optimal Signaling of Content Accuracy: Engagement vs. Misinformation},
  journal = {Operations Research},
  volume  = {68},
  number  = {2},
  pages   = {497--515},
  year    = {2020}
}

@article{davis2007developing,
  author  = {Davis, Jason P. and Eisenhardt, Kathleen M. and Bingham, Christopher B.},
  title   = {Developing Theory Through Simulation Methods},
  journal = {Academy of Management Review},
  volume  = {32},
  number  = {2},
  pages   = {480--499},
  year    = {2007}
}

@article{harrison2007simulation,
  author  = {Harrison, J. Richard and Lin, Zhiang and Carroll, Glenn R. and Carley, Kathleen M.},
  title   = {Simulation Modeling in Organizational and Management Research},
  journal = {Academy of Management Review},
  volume  = {32},
  number  = {4},
  pages   = {1229--1245},
  year    = {2007}
}

@article{rand2011agent,
  author  = {Rand, William and Rust, Roland T.},
  title   = {Agent-Based Modeling in Marketing: Guidelines for Rigor},
  journal = {International Journal of Research in Marketing},
  volume  = {28},
  number  = {3},
  pages   = {181--193},
  year    = {2011}
}

@inproceedings{chaney2018algorithmic,
  author    = {Chaney, Allison J. B. and Stewart, Brandon M. and Engelhardt, Barbara E.},
  title     = {How Algorithmic Confounding in Recommendation Systems Increases Homogeneity and Decreases Utility},
  booktitle = {Proceedings of the 12th ACM Conference on Recommender Systems (RecSys '18)},
  pages     = {224--232},
  year      = {2018},
  publisher = {ACM}
}
% ==============================================================

\newpage
\begin{APPENDIX}{Supplementary Material}

\section{Stage 2 Analysis Plan}
\label{app:prereg}

We fixed the analysis plan for Stage~2 before running the confirmatory simulations, and the plan specifies the following five elements. First, for each scoring rule, the \emph{gap} of a metric is its mean over low-credibility tweets minus its mean over high-credibility tweets, $\mathrm{gap} = \mathrm{mean}(\text{metric}\mid\text{low-credibility}) - \mathrm{mean}(\text{metric}\mid\text{high-credibility})$. Second, the \emph{contrast} of a scoring rule is its gap minus the gap under the additive rule, as defined in Section~\ref{sec:propositions}. Third, cascade size is the primary metric, while exposure and time to peak are secondary metrics. Fourth, each scoring rule is simulated $20$ times with different random seeds, and the uncertainty of each contrast is measured by its spread across these replicates. Fifth, the result is classified into one of four outcomes by the confidence intervals of the cascade size contrast, where outcome~\textbf{A} (the interval of the multiplicative form F1 lies entirely below $0$, while the interval of the retuned control includes $0$) supports our claim, outcome~\textbf{B} (the interval of F1 includes $0$) is a null result, outcome~\textbf{C} (both intervals lie entirely below $0$) requires further diagnosis, and outcome~\textbf{D} (the interval of F1 lies entirely above $0$) contradicts our prediction.

Before the confirmatory results were known, we added a dated amendment to the plan with three changes. First, since the spread across replicates captures only random variation between simulator runs and is zero for exposure, which is fully determined by the scores, we added a $1{,}000$-iteration bootstrap that resamples the same cascades under all scoring rules, and we apply the decision rule to this bootstrap interval. Second, we raised the number of replicates from $20$ to $100$. Third, we made exposure a primary metric together with cascade size. Apart from this amendment, the plan is unchanged, and we report the exposure and cascade size results under it in Table~\ref{tab:arch} of Section~\ref{sec:results-stage2}.

\section{Robustness Check Details}
\label{app:robust}

This appendix reports the detailed results of the robustness checks summarized in Section~\ref{sec:robust}. Throughout, a \emph{contrast} is the change in the propagation gap under a scoring rule relative to the additive rule, as defined in Section~\ref{sec:propositions}, so that a negative contrast means that the rule narrows the gap.

\paragraph{Check 1: Aggregation form.}
To check whether the result depends on the specific form of reflective-gated aggregation, we compare each of the three reflective-gated forms and the retuned control with the additive rule, and we report the results in Table~\ref{tab:arch} of the main text. We find that all three reflective-gated forms narrow both the exposure gap and the cascade size gap significantly, while the retuned control moves both gaps in the opposite direction. Among the three forms, the reflective threshold (F3) produces the largest effect on both metrics, and the ratio correction (F2) produces the smallest.

\paragraph{Check 2: Slow/fast partition.}
To check whether the result depends on which engagement types count as fast and which count as slow, we repeat the analysis under four ways of assigning the engagement types to the two classes. We find that all four assignments produce significantly negative contrasts for both exposure and cascade size, with exposure contrasts between $-24.8$ and $-9.4$.

\paragraph{Check 3: Ranker training seed.}
To check whether the result depends on the randomness in training the ranker, we retrain the ranker five times with different random seeds and repeat the analysis with each of them. We find that the exposure contrasts range from $-21.3$ to $-15.5$ and are all significant, while the cascade size contrasts range from $-9.1$ to $-3.8$ and are significantly negative for all five rankers, so the finding does not depend on the randomness in ranker training.

\paragraph{Check 4: Corpus period.}
To check whether the result depends on the period of the corpus, we repeat the analysis on an adjacent week-long slice of the corpus. We find that the exposure contrast is $-16.3$ (s.e.\ $4.82$) on the adjacent slice, which is almost the same as the $-16.14$ (s.e.\ $4.34$) under the baseline configuration in Table~\ref{tab:arch}.

\paragraph{Check 5: Partition and training seed combined.}
To check whether the partition of Check~2 and the training seed of Check~3 interact with each other, we run the full simulation for each of the $4 \times 5 = 20$ combinations of the two. We find that both the exposure contrast and the cascade size contrast are significantly negative in all 20 combinations, so no combination of the two changes the finding.

\paragraph{Check 6: User-pool size.}
To check whether the result depends on the number of simulated users, we repeat the analysis with user pools of 25K, 50K, and 100K users under the default partition and the baseline ranker. We find that all three pool sizes produce the contrasts reported in Table~\ref{tab:arch}, with an exposure contrast of $-16.14$ (s.e.\ $4.34$) and a cascade size contrast of $-3.79$ (s.e.\ $0.94$), since each cascade is simulated from the score and the predicted engagement of its own seed tweet, so the contrasts do not depend on how many users are sampled.

\paragraph{Check 7: Training-corpus size.}
To check whether the result depends on how much data the ranker is trained on, we retrain the ranker on about 2M and then about 5M tweets, and we evaluate each ranker on the same seed tweets as the baseline, so that any change in the contrasts comes from the training data alone. The prediction accuracy of the ranker improves slightly with more data, as the AUC for replies rises from $0.749$ to $0.751$ and $0.755$ as the training set grows from 1M to 2M and 5M tweets. As reported in Table~\ref{tab:check7}, we find that the exposure contrast of the multiplicative form (F1) becomes steadily larger as the training corpus grows ($-16.1$, $-17.2$, and $-19.3$), and that its cascade size contrast, which is already significantly negative under the baseline training, becomes larger with more training data ($-3.8$, $-7.0$, and $-8.1$), while its standard error becomes smaller. In addition, the retuned control continues to move the gap in the opposite direction at every training size we test.

\begin{table}[h]
\centering
\footnotesize
\caption{Contrasts when the ranker is trained on larger corpora (Check 7).}
\label{tab:check7}
\begin{tabular}{lrllc}
\toprule
Training corpus & $n_{\mathrm{train}}$ & Metric & Scoring rule & Estimate \\
\midrule
Baseline & 1.0M & Exposure   & F1 (multiplicative) & $-16.14$$^{***}$ (4.34) \\
Baseline & 1.0M & Cascade size     & F1 (multiplicative) & $-3.79$$^{***}$ (0.94) \\
Enlarged & 2.0M & Exposure   & F1 (multiplicative) & $-17.22$$^{***}$ (3.93) \\
Enlarged & 2.0M & Cascade size     & F1 (multiplicative) & $-7.04$$^{***}$ (0.82) \\
Enlarged & 5.0M & Exposure   & F1 (multiplicative) & $-19.28$$^{***}$ (4.25) \\
Enlarged & 5.0M & Cascade size     & F1 (multiplicative) & $-8.07$$^{***}$ (0.73) \\
Enlarged & 5.0M & Exposure   & Retuned additive (control) & $+1.52$$^{***}$ (0.14) \\
\bottomrule
\end{tabular}
\par\vspace{2pt}{\SingleSpacedXI\footnotesize \textit{Note.} Bootstrap standard errors in parentheses. $^{*}\,p<0.10$, $^{**}\,p<0.05$, $^{***}\,p<0.01$.\par}
\end{table}

\paragraph{Check 8: Ranker design.}
To check whether the result depends on the design of the ranker, we first retrain the ranker on four subsets of its input features, namely the full set of 18 features, the 8 tweet and time features alone, the 10 account features alone, and a minimal set of three features (log followers, log text length, and whether the tweet contains a URL). Since the simulated exposure of a tweet increases with its score under a fixed calibration, we measure the effect of each scoring rule directly on the scores of the seed tweets of the hypothesis test, as the change in the gap between the average scores of low- and high-credibility tweets relative to the additive rule. As reported in Table~\ref{tab:check8}, we find that the multiplicative form (F1) narrows this gap while the retuned control widens it in all four feature subsets, including the minimal three-feature ranker. We then repeat the same procedure with two other types of prediction model in place of MaskNet, namely a small multilayer perceptron and a transformer that uses a two-layer encoder over the 18 feature tokens and is trained in the same way. We find that the same pattern holds for all three model types, and that the transformer, whose prediction accuracy matches that of MaskNet (AUC $0.746$ for replies, $0.845$ for retweets, $0.741$ for likes, and $0.897$ for quotes), produces the largest contrasts of the three ($-0.427$ with s.e.\ $0.0954$ for F1, and $+0.301$ with s.e.\ $0.0339$ for the retuned control). The transformer is closer in design to the model that X deployed in 2026 (Section~\ref{sec:pipeline}), although it is not a reconstruction of that model, since the production model also uses sequences of past user engagement that the USC data do not contain. Therefore, the finding depends neither on the features of the ranker nor on the type of prediction model.

\begin{table}[h]
\centering
\footnotesize
\caption{Score contrasts across ranker feature subsets (Check 8).}
\label{tab:check8}
\begin{tabular}{lrll}
\toprule
Subset & $n_{\mathrm{features}}$ & F1 score contrast & Control score contrast \\
\midrule
Full                & 18 & $-0.354$$^{***}$ $(0.081)$ & $+0.261$$^{***}$ $(0.031)$ \\
Tweet and time only & 8  & $-0.142$$^{***}$ $(0.012)$ & $+0.167$$^{***}$ $(0.011)$ \\
Account only        & 10 & $-0.216$$^{**}$ $(0.088)$ & $+0.096$$^{***}$ $(0.031)$ \\
Minimal             & 3  & $-0.199$$^{***}$ $(0.072)$ & $+0.141$$^{***}$ $(0.030)$ \\
\bottomrule
\end{tabular}
\par\vspace{2pt}{\SingleSpacedXI\footnotesize \textit{Note.} Each score contrast is the change in the gap between the average scores of low- and high-credibility tweets relative to the additive rule. Bootstrap standard errors in parentheses. $^{*}\,p<0.10$, $^{**}\,p<0.05$, $^{***}\,p<0.01$.\par}
\end{table}

\paragraph{Check 9: Gate hyperparameters.}
\label{sec:hyperparam}
While the checks above vary the environment of the experiment, this check varies the hyperparameters of the reflective-gated forms themselves, to see whether the finding depends on how strongly the gate acts. Specifically, we vary the gain $\alpha$ of the multiplicative form (F1) over $\{0.25, 0.5, 1, 2, 4, 8\}$, the gain $\alpha$ of the ratio-correction form (F2) over $\{0.25, 0.5, 1, 2, 4\}$, and the threshold $\theta$ and steepness (scale) of the reflective threshold (F3) over $\theta\in\{0.5,1,1.5,2\}$ and $\mathrm{scale}\in\{0.25,0.5,1\}$, while reusing the same trained ranker, seed tweets, and calibration. As shown in Figure~\ref{fig:hyperparam}, we find that the exposure contrast is significantly negative at all 23 settings, so the direction of the effect never changes. The effect becomes larger as $\alpha$ increases and as $\theta$ increases, and for $\theta \geq 1$ it also becomes larger as the gate becomes steeper, i.e., as the scale becomes smaller. The settings used in the main text, namely $\alpha = 1$ for F1 and F2 and $\theta = 1$ with a scale of $0.5$ for F3, give the values reported in Table~\ref{tab:arch}. It is worth noting that the very large contrasts at $\theta \geq 1.5$ arise because such a gate scales down the scores of most tweets, so that the tweets that clear the threshold receive a much larger exposure allocation, which is why Section~\ref{sec:operating} measures the benefit of the gate by the exposure share of each class at the same total exposure. Table~\ref{tab:hpfull} gives the results for every setting.

\begin{figure}[t]
\centering
\includegraphics[width=0.85\linewidth]{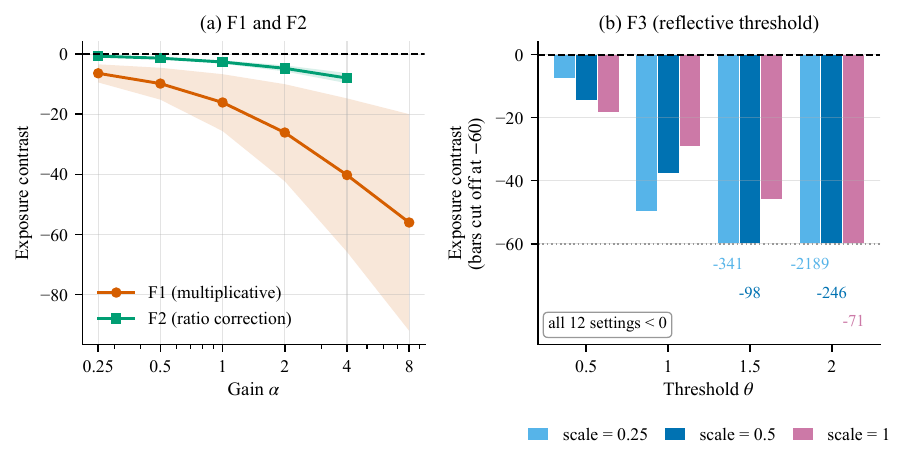}
\caption{Sensitivity to the gate hyperparameters. (a)~Exposure contrast against the gain $\alpha$ of F1 and F2. (b)~Exposure contrast across the threshold and scale settings of F3. Bars below $-60$ are cut off, and their values are printed beneath them.}
\label{fig:hyperparam}
\end{figure}

\begin{table}[h]
\centering
\footnotesize
\caption{Exposure contrast at every hyperparameter setting (Check~9).}
\label{tab:hpfull}
\begin{tabular}{llr}
\toprule
Form & Setting & Estimate \\
\midrule
F1 (multiplicative)  & $\alpha=0.25$ & $-6.38$$^{***}$ (1.54) \\
                              & $\alpha=0.5$  & $-9.81$$^{***}$ (2.70) \\
                              & $\alpha=1.0$  & $-16.14$$^{***}$ (4.34) \\
                              & $\alpha=2.0$  & $-26.09$$^{***}$ (8.24) \\
                              & $\alpha=4.0$  & $-40.23$$^{***}$ (13.0) \\
                              & $\alpha=8.0$  & $-56.01$$^{***}$ (18.4) \\
\midrule
F2 (ratio correction)& $\alpha=0.25$ & $-0.67$$^{***}$ (0.074) \\
                              & $\alpha=0.5$  & $-1.35$$^{***}$ (0.148) \\
                              & $\alpha=1.0$  & $-2.63$$^{***}$ (0.28) \\
                              & $\alpha=2.0$  & $-4.72$$^{***}$ (0.52) \\
                              & $\alpha=4.0$  & $-7.96$$^{***}$ (0.862) \\
\midrule
F3 (reflective threshold)& $\theta=0.5,\,\mathrm{scale}=0.25$ & $-7.67$$^{***}$ (0.921) \\
                              & $\theta=0.5,\,\mathrm{scale}=0.5$  & $-14.54$$^{***}$ (1.97) \\
                              & $\theta=0.5,\,\mathrm{scale}=1.0$  & $-18.43$$^{***}$ (2.90) \\
                              & $\theta=1.0,\,\mathrm{scale}=0.25$ & $-49.60$$^{***}$ (6.61) \\
                              & $\theta=1.0,\,\mathrm{scale}=0.5$  & $-37.83$$^{***}$ (5.05) \\
                              & $\theta=1.0,\,\mathrm{scale}=1.0$  & $-29.23$$^{***}$ (4.76) \\
                              & $\theta=1.5,\,\mathrm{scale}=0.25$ & $-341.0$$^{***}$ (48.7) \\
                              & $\theta=1.5,\,\mathrm{scale}=0.5$  & $-97.55$$^{***}$ (14.4) \\
                              & $\theta=1.5,\,\mathrm{scale}=1.0$  & $-45.80$$^{***}$ (7.81) \\
                              & $\theta=2.0,\,\mathrm{scale}=0.25$ & $-2188.7$$^{***}$ (324) \\
                              & $\theta=2.0,\,\mathrm{scale}=0.5$  & $-245.6$$^{***}$ (39.7) \\
                              & $\theta=2.0,\,\mathrm{scale}=1.0$  & $-70.93$$^{***}$ (12.9) \\
\bottomrule
\end{tabular}
\par\vspace{2pt}{\SingleSpacedXI\footnotesize \textit{Note.} Bootstrap standard errors in parentheses. $^{*}\,p<0.10$, $^{**}\,p<0.05$, $^{***}\,p<0.01$.\par}
\end{table}

\section{Cascade-Simulation and Calibration Procedure}
\label{app:algorithm}

The three parameters of the event model, $(\pi_{\mathrm{active}}, \beta, r)$, are set by matching the following quantities in the observed data,
\begin{align}
\pi_{\mathrm{active}} &\;=\; 1 - P_{\mathrm{obs}}(\mathrm{reply}=0), \label{eq:framework-cal-pi}\\
\beta &\;=\; \frac{\mathbb{E}_{\mathrm{obs}}[\mathrm{reply}\mid \mathrm{reply}>0]}{\mathbb{E}\bigl[S_{\mathrm{rel}}\, p_{\mathrm{reply}}\bigr]}, \label{eq:framework-cal-beta}\\
r &\;=\; \frac{\bar{\mu}_{\mathrm{trim}}^{\,2}}{\,\widehat{\mathrm{Var}}_{\mathrm{trim}} - \bar{\mu}_{\mathrm{trim}}\,}, \label{eq:framework-cal-r}
\end{align}
where $P_{\mathrm{obs}}(\mathrm{reply}=0)$ is the observed share of tweets with no replies, so that the activity gate $\pi_{\mathrm{active}}$ reproduces the share of cascades without replies, $\mathbb{E}_{\mathrm{obs}}[\mathrm{reply}\mid\mathrm{reply}>0]$ is the observed average reply count among tweets with at least one reply, $\mathbb{E}[S_{\mathrm{rel}}\,p_{\mathrm{reply}}]$ is the expected reply intensity over the seed tweets, so that $\beta$ converts this average into the units of the simulator, and $\bar{\mu}_{\mathrm{trim}}$ and $\widehat{\mathrm{Var}}_{\mathrm{trim}}$ are the mean and variance of the non-zero observed reply counts after dropping the largest 5\%. We calibrate these parameters under the additive rule, since the additive rule is the production-style baseline whose outcomes should match the observed magnitudes, and since our design requires that the differences between scoring rules come only from the score-aggregation layer, so the other scoring rules reuse the same parameters without re-fitting. We drop the largest 5\% of the counts before computing the variance, since the raw variance is dominated by a small number of extremely large counts (in the USC corpus, the variance of non-zero reply counts is about $1.5\times10^5$ while their mean is about $38$), which would push the estimate of $r$ to an implausibly small value, produce a U-shaped distribution of simulated counts, and worsen every Stage~1 fit. Dropping these counts gives up some accuracy on the most extreme cascades in exchange for a much better match on typical cascades, which are the ones our mechanism concerns. The estimate of $r$ is kept within $[0.05, 100]$, where values near the upper end make the model close to a Poisson model. Algorithm~\ref{alg:simulation} gives the complete procedure.

\begin{algorithm}[H]
\OneAndAHalfSpacedXI\normalsize
\caption{Cascade simulation with method-of-moments calibration.}
\label{alg:simulation}
\begin{algorithmic}[1]
\Require a corpus part, the trained ranker $f$, the set of scoring rules $\mathcal{R}$, and the number of replicates $M$
\State keep one record per observed author (\texttt{user\_id\_str}), and sample 50{,}000 users stratified by follower-count quintile and paid verification (\texttt{user\_blue})
\State sample the seed tweets, each of which starts a new cascade (for Stage~2, stratified by \texttt{credibility\_label})
\State estimate the hourly posting profile $d$ (24 weights) from the observed posting times
\State $p_{i,k} \gets \sigma\bigl(f_k(\mathbf{x}_i)\bigr)$ for every seed tweet $i$ and engagement objective $k$
\State estimate $(\pi_{\mathrm{active}}, \beta, r)$ under the additive rule \Comment{Eqs.~\eqref{eq:framework-cal-pi}--\eqref{eq:framework-cal-r}}
\State keep $(\pi_{\mathrm{active}}, \beta, r)$ fixed for all scoring rules
\For{scoring rule $\Phi \in \mathcal{R}$}
  \For{replicate $m = 1, \ldots, M$, each with a different random seed}
    \State $S_i \gets \Phi(\mathbf{p}_i)$;\quad $S_{\mathrm{rel},i} \gets S_i / \operatorname{med}(S)$
    \State $\mathit{E}_i \gets \operatorname{clip}(\beta\, S_{\mathrm{rel},i},\, E_{\min},\, E_{\max})$
    \State $\tilde q_{i,t} \propto \exp(-t/\tau)\, d_{(h_i+t) \bmod 24}$, normalized over $t$
    \State $a_i \sim \mathrm{Bernoulli}(\pi_{\mathrm{active}})$
    \State $\lambda_{i,k} \sim \mathrm{Gamma}\bigl(r,\, \mathit{E}_i\, p_{i,k}/r\bigr)$;\quad $n_{i,t,k} \gets a_i \cdot \mathrm{Poisson}\bigl(\lambda_{i,k}\, \tilde q_{i,t}\bigr)$ \Comment{Eqs.~\eqref{eq:framework-mu}--\eqref{eq:framework-counts}}
    \State record the totals of each cascade and its hourly counts $n_{i,t,k}$
  \EndFor
\EndFor
\State combine the replicates for the Stage~1 and Stage~2 analyses
\end{algorithmic}
\end{algorithm}

\FloatBarrier
\section{Calibration Iteration History}
\label{app:calib}

We reached the calibration reported in Section~\ref{sec:results-stage1} in four iterations. Iteration~(i) used a Poisson model with its mean matched to the data, while iteration~(ii) added an activity gate $\pi_{\mathrm{active}}$ matched to the observed share of tweets with no replies, which makes the model a zero-inflated Poisson model. Iteration~(iii) then replaced the Poisson model with a zero-inflated Negative-Binomial model, with the dispersion $r$ estimated from the mean and variance of the non-zero reply counts, and iteration~(iv) keeps this model but estimates $r$ after dropping the largest 5\% of the non-zero counts. Table~\ref{tab:calib} reports how well each iteration matches the observed distributions of root reply counts and of the reflective share of engagement, measured by the Kolmogorov--Smirnov (KS) statistic, where a smaller value indicates a better fit.

\begin{table}[h]
\centering
\footnotesize
\caption{Calibration iterations and their Kolmogorov--Smirnov fit.}
\label{tab:calib}
\begin{tabular}{llp{4.6cm}cc}
\toprule
Iteration & Event model & Estimate of the dispersion $r$ & KS (root reply) & KS (reflective share) \\
\midrule
(i)   & Poisson with the mean matched   & ---                                                     & 0.60  & 0.62  \\
(ii)  & Zero-inflated Poisson           & ---                                                     & 0.090 & 0.097 \\
(iii) & Zero-inflated Negative-Binomial & Mean and variance of the non-zero counts                & 0.165 & 0.150 \\
(iv)  & Zero-inflated Negative-Binomial & \textbf{Mean and variance after dropping the largest 5\% of the non-zero counts} & \textbf{0.057} & \textbf{0.061} \\
\bottomrule
\end{tabular}
\par\vspace{2pt}{\SingleSpacedXI\footnotesize \textit{Note.} KS (root reply) and KS (reflective share) compare the simulated and observed distributions of the root reply count and of the reflective share of engagement.\par}
\end{table}

The improvement from iteration~(iii) to iteration~(iv) comes from the estimate of $r$. Since the raw variance of non-zero reply counts is dominated by a few extremely large counts, the moment estimate $r=\bar\mu^2/(\widehat{\mathrm{Var}}-\bar\mu)$ becomes implausibly small ($r\approx 0.0098$), under which most simulated cascades draw almost no replies while a few draw extremely many, so that iteration~(iii) fits worse than iteration~(ii) on both distributions. Dropping the largest 5\% of the non-zero counts before computing the variance removes the influence of these extreme counts on the estimate while keeping the information about typical cascades, and yields $r=0.318$, which is close to the value that minimizes the KS statistic in a direct grid search ($r\approx 0.5$, KS $0.046$). Trimming the largest observations before estimating a variance is a standard approach in robust statistics to keep a few extreme values from dominating a moment estimate, and among the four iterations, iteration~(iv) gives the best fit.

\section{Data Processing Details}
\label{app:data}

Tweet identifiers (\texttt{id} and \texttt{id\_str}) are stored with all 19 digits, but the conversation identifier (\texttt{conversationId}) and the identifier of the replied-to tweet (\texttt{in\_reply\_to\_status\_id\_str}) are stored in scientific notation, which keeps only about 16 significant digits, so linking replies to the tweets they reply to is approximate for long identifiers. The \texttt{viewCount} and \texttt{user} fields are stored as Python dictionaries written with single quotes and not as JSON, and the \texttt{user} field contains date values written as \texttt{datetime.datetime(...)}, so we first convert these values with a regular expression and then parse the fields with \texttt{ast.literal\_eval}. The legacy verification flag \texttt{user\_verified} is always $0$ in data collected after 2022, so we use the paid-verification flag \texttt{user\_blue} throughout, which is set for $22.9\%$ of tweets and $9.3\%$ of authors in the full corpus. It is worth noting that two properties of the corpus affect the descriptive statistics in Table~\ref{tab:descriptives}. First, about $0.3\%$ of the records are retweets, which carry the engagement counts of the original tweet and not their own, so including them would raise the average retweet count of the corpus from $8.3$ to $20.5$, and we therefore exclude them from the engagement statistics. Second, $453{,}307$ of the $42.85$M records in the release are malformed, almost all of which are truncated records that carry no timestamp, author information, or engagement counts, and another $2{,}096{,}410$ records are repeated copies of a tweet that was collected more than once, and we drop both, which leaves the $40{,}302{,}075$ tweets of Table~\ref{tab:descriptives}.

The ranker uses 18 input features, namely ten account features and eight tweet and time features. The account features are the logarithms of the author's follower count, number of accounts followed, tweets posted, favorites given, list memberships, follower-to-friend ratio, favorites per tweet, tweets per day, and account age, together with the paid-verification flag. The tweet and time features are the logarithm of the text length, whether the tweet is a reply, a quote-tweet, or contains a URL, and the sine and cosine of the posting hour and of the hour of the week.

\end{APPENDIX}

\end{document}